\documentclass[10pt]{article} 
\usepackage[accepted]{tmlr}


\usepackage{amsmath,amsfonts,bm}









\def\eqref#1{equation~\ref{#1}}









\def\1{\bm{1}}










\DeclareMathAlphabet{\mathsfit}{\encodingdefault}{\sfdefault}{m}{sl}
\SetMathAlphabet{\mathsfit}{bold}{\encodingdefault}{\sfdefault}{bx}{n}











\newcommand{\E}{\mathbb{E}}



\DeclareMathOperator*{\argmax}{arg\,max}

\DeclareMathOperator{\sign}{sign}

\usepackage{hyperref}
\usepackage{url}
\usepackage{amsthm}
\usepackage[noabbrev]{cleveref}



\usepackage{booktabs}
\usepackage{siunitx}
\usepackage{graphicx} 
\usepackage{amsfonts}       
\usepackage{nicefrac}       
\usepackage{microtype}      
\usepackage{amsmath}
\usepackage{float}
\usepackage{bm}
\usepackage{amssymb}
\usepackage{bbm}
\usepackage{color}
\usepackage{mathtools}
\usepackage[super]{nth}
\usepackage{etoolbox}
\usepackage{adjustbox}
\usepackage{enumitem}
\usepackage{algorithm}

\usepackage{caption}
\let\classAND\AND
\let\AND\relax
\usepackage{algorithmic}

\let\AND\classAND
\AtBeginEnvironment{algorithmic}{\let\AND\algoAND}

\floatname{algorithm}{Algorithm}

\floatname{algorithm}{Algorithm}

\newtheorem{definition}{Definition}
\newtheorem{coro}{Corollary}
\newtheorem{remark}{Remark}
\newtheorem{lemma}{Lemma}
\newtheorem{thm}{Theorem}

\newtheorem{experiment}{Experiment}

\def\cR{{\cal R}}
\def\cE{{\cal E}}

\def\cZ{{\cal Z}}

\def\cD{{\cal D}}
\def\cX{{\cal X}}
\def\cY{{\cal Y}}

\def\cP{{\cal P}}
\def\cQ{{\cal Q}}

\def\cZ{{\cal Z}}
\def\cX{{\cal X}}
\def\cY{{\cal Y}}

\def\cA{{\cal A}}

\def\reals{{\mathbb R}}
\def\prob{{\mathbb P}}

\def\E{\mathbb E}

\def\sign{{\rm sgn}}

\def\Adv{{\rm Adv}}
\def\AdvI{{\rm AdvI}}



\makeatletter
\newcommand{\printfnsymbol}[1]{%
  \textsuperscript{\@fnsymbol{#1}}%
}
\makeatother


\title{Mace: A flexible framework for\\membership privacy estimation in generative models}

\author{\name Yixi Xu\thanks{equal contribution}\email yixx@microsoft.com \\
      \addr Microsoft
      \AND
      \name Sumit Mukherjee\printfnsymbol{1}  \email sumitmukherjee2@gmail.com \\
      \addr Insitro
      \AND
      \name Xiyang Liu \email xiyangl@cs.washington.edu\\
      \addr University of Washington
      \AND
      \name Shruti Tople \email shruti.tople@microsoft.com\\
      \addr Microsoft Research
       \AND
      \name Rahul Dodhia \email rahul.dodhia@microsoft.com\\
      \addr Microsoft
        \AND
      \name Juan Lavista Ferres \email jlavista@microsoft.com\\
      \addr Microsoft \footnotetext{Equal contribution.}
      }


\begin{document}
\maketitle
\begin{abstract}
Generative machine learning models are being increasingly viewed as a way to share sensitive data between institutions. While there has been work on developing differentially private generative modeling approaches, these approaches generally lead to sub-par sample quality, limiting their use in real world applications. Another line of work has focused on developing generative models which lead to higher quality samples but currently lack any formal privacy guarantees. In this work, we propose the first formal framework for membership privacy estimation in generative models.  We formulate the membership privacy risk as a statistical divergence between training samples and hold-out samples, and propose sample-based methods to estimate this divergence. Compared to previous works, our framework makes more realistic and flexible assumptions. First, we offer a generalizable metric as an alternative to the accuracy metric \citep{yeom2018privacy, hayes2019logan} especially for imbalanced datasets. Second, we loosen the assumption of having full access to the underlying distribution from previous studies
\citep{yeom2018privacy,jayaraman2020revisiting}, and propose sample-based estimations with theoretical guarantees. Third, along with the population-level membership privacy risk estimation via the optimal membership advantage, we offer the individual-level estimation via the individual privacy risk. Fourth, our framework allows adversaries to access the trained model via a customized query, while prior works require specific attributes \citep{hayes2019logan,chen2019gan,hilprecht2019monte}. 
\end{abstract}


\section{Introduction}
\label{sec:intro}
The past decade has seen much progress in machine learning, 
largely due to the rapid growth in the number of large-scale 
datasets. However, concerns about the privacy of individuals being represented in datasets have led to a variety of regulations that made it increasingly difficult to share sensitive data across institutions especially in healthcare \citep{voigt2017eu}. Recent progress in the area of generative machine learning has made it possible to share synthetic data, which reflects the statistical properties of the original datasets \citep{georges2020synthetic}. Such synthetic data has been shown to allow for the development of downstream machine learning models with limited loss of performance compared to the original data \citep{rajotte2021felicia}. This has led to synthetic data sharing being increasingly viewed as a potentially privacy preserving alternative to sharing the raw data \citep{tom2020protecting} between institutions. 

Despite the appeal of using synthetic data generated by generative models as an alternative to traditional data sharing, recent work has shown common generative modeling approaches are often vulnerable to a variety of privacy attacks \citep{hilprecht2019monte,hayes2019logan,chen2019gan}. This led to the development of differentially private generative modeling approaches which allow for the generation of synthetic data while providing strong formal privacy guarantees \citep{xie2018differentially,jordon2018pate}. However, in the case of high dimensional datasets (such as images), such approaches have been shown to produce synthetic samples of very poor quality for any reasonable level of guaranteed privacy \citep{xie2018differentially,mukherjee2019protecting}. For example,  \citet{mukherjee2019protecting} generated 50,000 synthetic CIFAR10 samples using the differentially private GAN  \citep{xie2018differentially} with a large privacy budget $\varepsilon = 100$. Then, a classifier was trained on the synthetic data, however the accuracy was below 20\% when validated on the test set. It was initially assumed that the poor performance was the result of loose privacy accounting (leading to an overestimation of $\varepsilon$), but this assumption is recently challenged by a work indicating that moments accountant-based approaches lead to tight estimates of $\varepsilon$ \citep{nasr2021adversary}. Thus greatly reducing the possibility of developing a differentially private generative modeling approach that could  generate samples good enough to train a strong ML model in practical high-dimensional data settings.

More recent work has focused on developing novel generative learning methods that have been empirically shown to be protected against certain types of privacy attacks, such as membership inference attacks \citep{mukherjee2019protecting,chen2021gan}. 
%
Despite these advancements in empirically improving the privacy of generative models in a few settings, there is currently no approach to provide formal privacy certificates for such models. This in turn has limited the usability of such models in real-world applications where the complete lack of formal certificates would pose a problem with regulators. A promising line of related work has been in using membership inference attacks to audit the privacy of trained machine learning models (primarily discriminative models) with theoretical justifications  \citep{yeom2018privacy,jayaraman2020revisiting}. More specifically, these works estimate the membership privacy risk of a model against a specific adversary. As a result, it would be computationally expensive to estimate the maximum risk when there is a large group of adversaries, or even impossible given an infinite set of adversaries. As a comparison, MACE is able to estimate the maximum privacy risk via the Bayes optimal classifier. Another limitation is that these frameworks assume a full access to the underlying data distribution. For example, \citet{yeom2018privacy} used a dataset including 4819 patients who were prescribed warfarin, collected by the International Warfarin Pharmacogenetics Consortium to demonstrate their methodology. However, the method could hardly generalize if the population of interest includes all the patients who were prescribed warfarin instead of the specific 4819 patients. While the above case is more common in practice, it is usually not feasible to get full access to this kind of sensitive data. Thus, existing methods   \citep{yeom2018privacy,jayaraman2020revisiting} are not applicable, and this leaves a gap between theory and practice. To overcome this restriction, MACE allows for not only a full access but also a limited access to the data via a simple random sample.  Furthermore, MACE is able to provide consistent estimators of membership privacy risks at both individual and population level. This allows us to estimate the membership privacy risks of different subgroups, which is usually different as shown by  \citet{feldman2020does} on long-tailed distributions.  Last but not least, much of this line of work has focused on auditing differentially private discriminative models. As a comparison, we focus on trained generative models, which may or may not be differentially private.

To motivate our paper, let us look at a real-world application scenario. \textit{A clinical research institution wants to publicly release a medical imaging dataset to enable machine learning model development using the data. However, due to concerns about the personal health information (PHI) in the dataset, they look into synthetic data generation. Having identified a viable generative modeling approach, the institution is faced with a few questions prior to data/model release: i) should they release the synthetic dataset or the trained generative model?, ii) if they just release synthetic data, does it matter how many synthetic samples they release?, iii) how vulnerable is the synthetic data or the trained model to membership inference attacks.} Currently, there is no answer to these questions, unless making a strong assumption of the adversaries i.e. only considering a few specific heuristic membership inference attacks \citep{hayes2019logan,hilprecht2019monte}, which is not realistic in practice.  

In this paper, we begin to answer these questions through the development of a flexible statistical framework to measure the membership privacy risk in generative models (MACE: Membership privACy Estimation). Our framework is built on the formulation of the membership privacy risk (given a query access) as a statistical divergence between the distribution of training-set and non-training-set samples. We show the utility of our framework using many SOTA queries from the literature and some new ones against common computer vision as well as medical imaging datasets. 

Our primary contributions are as follows:
\begin{itemize} 
\item We develop a framework to estimate the maximum membership privacy risk against adversaries that have query access to the model. Our framework can not only estimate membership privacy risks that are defined as the accuracy of the membership inference attack as in \citep{yeom2018privacy}, but also those from a more general risk class \citep{koyejo2014consistent}. This gives the users flexibility to measure the ability of a membership inference attack to distinguish members from non-members from different angles, especially when the training set is a small part of the total available dataset. In addition, MACE is capable of estimating the membership privacy risk given any scalar or vector valued attributes from a learned model, while prior works \citep{hayes2019logan,chen2019gan,hilprecht2019monte} restrict to a set of specific attributes. 
\item Our framework is able to measure the  individual-level membership privacy risk
. This measures the risk of each individual sample against specific modes of membership inference attacks, allowing users to identify those high risk individuals and decide whether to exclude high risk samples and re-train their model.
\item We loosen the assumption of having full access to the training and non-training set from previous studies
\citep{yeom2018privacy,jayaraman2020revisiting}, and extend to the case where only a simple random sample is feasible. Furthermore, we derive consistent estimators for both the maximum membership privacy risk and the individual-level membership privacy risk with theoretical justifications. 
\item We demonstrate the usability of MACE by  experiments that analyze the membership privacy risks with regards to various query types and generative model architectures on three real-world datasets via the membership advantage and the individual privacy risk under both the accuracy-based and generalized metrics.

\end{itemize}
\section{Background}
In this section, we present a brief background on query functions, membership inference attacks, attack experiments, and the Bayes optimal classifier. Then we briefly discuss the limitations of current membership inference approaches.  We assume readers already possess a general understanding of generative models and Differential Privacy, but provide a short background section in the Appendix for the sake of completeness. 
\subsection{Notation}
We introduce notations that will be used in the rest of the paper. 
\begin{itemize}
    \item Let $z=(x,y)\in \cZ=\cX\times \cY$ be a data point from a data distribution $\cD$. Note that $y$ is some extra information for generative models such as conditional GANs. In other words, $z=x\in \cX$ for normal generative models.
    \item We assume an adversary $\cA$ would have query access to a model via a query function $Q_S(\cdot):\cX\times \cY\to \mathbb{R}^q $, where $q$ is an integer. Examples of adversaries and query functions are given in Sections \ref{section:adversary} and  \ref{sec:attack}, respectively. 
    \item Let $S\sim \cD^n$ be an ordered list of $n$ points. It is referred to as \textit{training set},  sampled from $\cD$. We will assume the training set $S$ to be fixed in this paper. 
    \item $z\sim S$ denotes uniformly sampling from a training set $S$. Also, $z\sim \cD\backslash S$ denotes uniformly sampling from the data distribution not including the training set $S$, which is referred to as sampling from a \textit{hold-out set}. 
    \item For a set of samples $\{z_1, z_2, \cdots, z_N\}$, we define their associated membership labels as $\{m_1, m_2, \cdots, m_N\}$, where $m_i=1$ if $z_i$ is in the training set and $m_i=-1$ otherwise for $i=1,\cdots,N$.
    \item For a given condition $\cal C$, let $\mathbb{I}({\cal C})=1$ if the condition $\cal C$ holds, otherwise 0. 
\end{itemize}

\subsection{Query functions}
\label{sec:attack}
In this subsection, we introduce some representative query functions $Q_S(z)$ considered in or motivated by prior work. 
Following \citet{chen2019gan, hilprecht2019monte, hayes2019logan}, we divide our attack settings based on the accessibility of model components: (1) access only to generated synthetic data and (2) access to models. 

\subsubsection{Accessible synthetic datasets}
\label{Sex:AccSyn}
In the common practice of synthetic data releasing, researchers or data providers may consider releasing only generated datasets or just the generator. However, prior works \citep{chen2019gan, hilprecht2019monte} have shown releasing generator/synthetic datasets can cause privacy leakage. Specifically, for the case where a generative model is released, \citet{chen2019gan} consider the following query function $Q_S(z) = \min_{w} L_2(z, G(w))$, where $G$ is the generator released to the public. Alternatively,  \citet{hilprecht2019monte} first generate a large synthetic dataset $g_1, g_2, \cdots, g_n$ using the generator and then use the following query function:
\begin{eqnarray}
\label{hilprecht}
    Q_S(z)=-\frac{1}{n} \sum_{i=1}^{n} \mathbb{I}({g_{i} \in U_{\varepsilon}(z)}) \log d\left(g_{i}, z\right),
\end{eqnarray}
where $d$ is some distance metric and $U_\varepsilon$ is $\varepsilon$-ball defined on distance metric $d$. 

Similar to these approaches, we assume that the generator memorizes the training data thus it generates synthetic dataset close to the training data. Under this assumption, if a sample x is closer to the synthetic dataset, it is more likely that x belongs to the training data.  Hence for a sample z, we consider using the nearest neighbor distance to synthetic datasets as the query function: 
\begin{eqnarray}
	Q_S(z) = \min_{j\in [n]} d(z, g_j)\;,
	\label{qs:syn}
\end{eqnarray}
where $d$ is a distance metric. 

\subsubsection{Accessible models}
\label{Sex:AccModels}
In this setting, we assume the adversary has query access to the model (the discriminator and the generator in the case of GANs). Such a situation commonly arises when researchers open source model parameters, or share model parameters insecurely.  

For generative models, especially GANs, the most successful attack known \citep{hayes2019logan}
 assumes adversaries to access the model via the following query:
\begin{eqnarray}
	Q_S(z) =  D(z)\;,
	\label{qs:disc}
\end{eqnarray}
where $D(z)$ is the output of the discriminator corresponding to input sample $z$. Intuitively, if a sample is in the training set, the discriminator would be more likely to output high values. While the adversary could solely access the discriminator via the above query, we introduce a query below allowing accessing both the generator and the discriminator. 
\begin{eqnarray}
	Q_S(z) =  (D(z), \min_{j\in [n]} d(z, g_j))\;.
	\label{qs:comb}
\end{eqnarray}
This attack is a combination of attacks described in Equations \ref{qs:disc} and \ref{qs:syn}.

While the discriminator score has been shown to be a very effective query for generative models with one discriminator, many recent privacy preserving generative modeling approaches often have multiple discriminators \citep{jordon2018pate,mukherjee2019protecting}, with each discriminator being exposed to a part of the training dataset. Our previous query will not be useful in such situations. Here, we consider the recent work privGAN(\citep{mukherjee2019protecting} and present two queries (one single dimensional and one multi-dimensional). The single dimensional query used in \citep{mukherjee2019protecting} is as follows: 
\begin{eqnarray}
	Q_S(z) =  \max_{i} D_{i}(z)\;,
\end{eqnarray}
where $D(z)$ is the output of the discriminator $i$ corresponding to input sample $z$. We propose a new multi-dimensional query which is stated as follows: 
\begin{eqnarray}
	Q_S(z) =  (D_{1}(z),\cdots,D_{r}(z))\;,
\end{eqnarray}
For the purposes of demonstration, in this paper we use $r=2$.

\subsection{Membership inference attack adversaries}
\label{section:adversary}
The goal of a membership inference attack (MIA) \citep{li2013membership,shokri2017membership, truex2018towards, long2017towards}, is to infer whether a sample $z$ is a part of the training set $S$. In our paper, we assume an MIA access the model via a query function $Q_S$. Thus, an MIA adversary is equivalent as a classifier given $Q_S(z)$ as the input to predict whether $z$ belongs to $S$.

In this paper, we focus on generative machine learning models (such as GANs). 
The study of MIAs against generative models is a relatively new research area. \citet{hayes2019logan} first demonstrated MIAs against GANs. They propose: i) a black-box adversary that trains a shadow GAN model using the released synthetic data, ii) a white-box adversary that uses a threshold on the discriminator score of a released GAN model. \citet{hilprecht2019monte} demonstrates a black-box MIA adversary that uses only the generator of the GANs (or synthetic samples) and operates by thresholding the L2 distance between the query sample and the closest synthetic sample. These existing works focus on the construction of a strong binary classifier as the MIA adversary,  given different query functions. The details of these query functions have been discussed in Section 2.2.



\subsection{The attack experiment}
The membership privacy risk that arises from 
 a query to the model is usually evaluated through a membership privacy experiment \citep{yeom2018privacy, jayaraman2020revisiting}. The experiment assumes we have sampled a training set $S$ with size $|S|=n$ from the data distribution $\cD$. Then a learning algorithm 
is trained on $S$, and an adversary would have access to the trained model through a query function $Q_S:\cZ \rightarrow \cQ$. To be specific, an adversary is provided with the query output of a randomly sampled point $z$ from either $S$ (with probability $p$) or $\cD\backslash S$ (with probability $1-p$). The adversary would then get a utility $1$ if it guesses the membership correctly or an utility $-1$ otherwise.


\subsection{The Bayes optimal classifier}
Since membership identification is essentially a binary classification task, a membership adversary can then be seen simply as a binary classification model. Indeed, many existing papers on membership inference explicitly train binary classifiers for the purpose of membership inference \citep{shokri2017membership}. The performance of such classifiers is often also used to empirically measure the membership privacy risk of different models \citep{mukherjee2019protecting}. As the binary classifiers used in such papers are heuristically chosen, there is no guarantee that a better classification model does not exist for the task. The classifier that  minimizes the expected error rate (maximizes accuracy) is called the Bayes optimal classifier \citep{devroye2013probabilistic}. Given, samples  $(x,y) \in \mathcal{R}^d \times \{-1,1\}$, the Bayes optimal classifier is: 
\begin{equation}
	C^{Bayes}(x) = \argmax_{r\in\{-1,1\}} \prob( Y=r | X=x).
	\label{eq:bayes_acc}
\end{equation}
Note: the Bayes optimal classifier can only be approximated in practical scenarios, since we rely on estimates of $\prob( Y=r | X=x)$.

While the Bayes optimal classifier in Equation \ref{eq:bayes_acc} was originally designed to maximize the classification accuracy, \citet{koyejo2014consistent} has extended it to a family of the generalized metrics. We first define $\rm TP$, $\rm FP$, $\rm FN$ and $\rm TN$ as $\prob(\cA(X)=1,Y=1)$, $\prob(\cA(X)=1,Y=-1)$, $\prob(\cA(X)=-1,Y=1)$ and $\prob(\cA(X)=-1,Y=-1)$ respectively. Next, we show the definition of the generalized metrics as follows:
\begin{equation}
\label{eq:metric}
	\ell(\cA,\cP) = 
	\frac{a_{0}+a_{11} \mathrm{TP}+a_{10} \mathrm{FP}+a_{01} \mathrm{FN}+a_{00} \mathrm{TN}}{b_{0}+b_{11} \mathrm{TP}+b_{10} \mathrm{FP}+b_{01} \mathrm{FN}+b_{00} \mathrm{TN}},\;    
\end{equation}
where $\cA$ is the adversary, $\cP$ is the distribution, $a_0$, $b_0$, $a_{ij}$ and $b_{ij}$ are pre-defined scalars for $i=0,1$ and $j=0,1$, $(X,Y)\sim\cP$. The generalized metric can be used to represent several commonly used metrics such as accuracy, PPV, TPR, TNR, WA etc. (see Appendix). It is then demonstrated that the Bayes optimal classifier for this family of generalized metrics takes the forms:
\begin{equation}
	\sign(\eta(x)-t_\ell) \;\mathrm{or}\; \sign(t_\ell-\eta(x)), 
	\label{eq:bayes_gen}
\end{equation}
where $\eta(x):=\prob(Y=1|\cA(X)=\cA(x))$ and $t_\ell\in (0,1)$ is a constant depending on the metric $\ell$, when the marginal distribution of $X$ is absolutely continuous with respect to the dominating measure on $\cX$ \citep{koyejo2014consistent}. It is worth noting that the Bayes optimal classifier for the generalized metric can only be approximated in practical scenarios, due to the lack of closed-form expressions of $\eta(x)$ and $t_\ell$.

\subsection{Limitations of current membership inference approaches}
There are several limitations in the existing literature on membership inference. First, most papers \citep{hayes2019logan,hilprecht2019monte} focus on developing novel heuristic membership inference attacks, which are often limited in scope and can hardly be extended to another query. This is particularly problematic as much of these heuristic approaches cannot readily generalize to the generative modeling setting.  Second, the current formal membership privacy estimation frameworks \citet{yeom2018privacy,jayaraman2020revisiting}
require a full access to the underlying data distribution, while this is not always possible in practice.  Third, no paper has yet provided a rigorous approach to estimate the membership privacy risk at the individual level. Fourth, for most of the current membership inference methods \citep{hayes2019logan,hilprecht2019monte} probability $p$ of Experiment \ref{exp:2} is usually set as $0.5$ to form a balanced binary classification problem. However, in practice, $p$ is usually much smaller than $0.5$, as pointed out in prior work \citep{jayaraman2020revisiting, rezaei2020towards}. In this work we seek to address all these issues.

\section{The Membership Privacy Risk Quantification}
\label{sec: mia}
In this section, we first introduce the membership advantage of a given adversary. Then, we define the optimal membership advantage as the maximum membership advantage. It is the maximum expected membership privacy risk of any adversary for the whole population. Furthermore, we present the 
optimal membership inference adversary. Finally, the individual privacy risk would be proposed to estimate the membership privacy risk at the individual level. The first subsection focuses on the accuracy-based metric, and the second subsection extends to the generalized metrics. 
We first propose an experiment formalizing membership inference attacks, adapted from \citet{yeom2018privacy}.
\begin{experiment}
\label{exp:2}
	Let 
	$\cD$ be the data distribution on $\cZ$. We first have a fixed training set $S\sim \cD^n$ with size $|S|=n$ and have a trained model
	. An adversary $\cA:\cQ\rightarrow \{-1,1\}$ would access the trained model via the query function $Q_S:\cZ \rightarrow \cQ$. The membership experiment proceeds as follows:
	\begin{enumerate}
		\item Randomly sample $m\in \{-1,1\}$ such that $m=1$ with probability $p$.
		\item If $m=1$, then uniformly sample $z\sim S$; otherwise sample $z\sim  \cD\backslash S$ uniformly.
		
	\end{enumerate}
Let $\cE(\cD,S,Q_S)$ denote the distribution of $(z,m)$ in Experiment  \ref{exp:2}.
\end{experiment}


\subsection{For the accuracy-based metric}

To define the optimal membership advantage, we first introduce the membership advantage of an adversary $\cA$ given the query $Q_S$, as given  in Definition 4 of \citet{yeom2018privacy}. We define the membership advantage of the query $Q_S$ by an adversary $ {\cal A}$ as the rescaled expected accuracy of the membership inference attack adversary ${\cal A}$. When $p=0.5$, the membership advantage is equal to the difference between the adversary’s true and false positive rates. 

\begin{definition} The membership advantage of $Q_S$ by an adversary $\cA$ is defined as 
\begin{equation}
	\Adv_p(Q_S, \cA) = 2\prob\left(\cA(Q_S(z))=m\right)-1 
	,
\end{equation}
\label{def:adv}

where the probability is taken over the random sample $(z,m)$ drawn from $\cE(\cD,S,Q_S)$.
\end{definition}
If the adversary $\cA$ is random guessing, $\Adv_p(Q_S, \cA)=0$. If the adversary always gets the membership right, then  $\Adv_p(Q_S, \cA)=1$. Further, note that $\prob\left(\cA(Q_S(z))=m\right)$ is the expected accuracy of the adversary's predictions. Hence this definition of membership advantage is directly related to accuracy. After introducing the membership advantage, we define the optimal membership advantage as the maximum membership advantage of all possible adversaries. 
\begin{definition}
\label{def:advoptimal2}
The optimal membership advantage is defined as
\begin{equation}
\label{def:advoptimal}
	\Adv_p(Q_S) = \max_\cA\Adv_p(Q_S, \cA). 
\end{equation}

\end{definition}

The following lemma obtains the optimal adversary for the accuracy-based metric. 
\begin{lemma}
\label{lemma:bayes}
Given the query $Q_S$, the data distribution $\cD$, the training set $S$ and the prior probability $p$, the Bayes optimal classifier $\cA^*$ maximizing membership advantage is given by
\begin{eqnarray}
\label{eqn:bayesoptimal}
	\cA^*(Q_S(z)) = \sign\left(\prob\left(m=1|Q_S(z)\right)-\frac12\right)\;,
\end{eqnarray}
where the probability is taken over the random sample $(z,m)$ drawn from $\cE(\cD,S,Q_S)$.

Furthermore, the optimal membership advantage can be re-written as
\begin{eqnarray}
	&&\Adv_p(Q_S) 
	= \E_z\left[|f_p(z)|\right]\;,
	\label{eq:gen2}
\end{eqnarray}
where 
\begin{equation}
   \label{eq:privloss} 
f_p(z)= 
\frac{\prob(Q_S(z)|m=1)p-\prob(Q_S(z)|m=-1)(1-p)}{\prob(Q_S(z)|m=1)p+\prob(Q_S(z)|m=-1)(1-p)}.    
\end{equation}

\end{lemma}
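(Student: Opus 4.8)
The plan is to reduce the maximization over adversaries to a pointwise optimization, identify the maximizer with the Bayes optimal classifier, and then evaluate the resulting advantage in closed form.

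First, observe that by Definition~\ref{def:adv} the quantity $\Adv_p(Q_S,\cA) = 2\prob(\cA(Q_S(z))=m)-1$ is an increasing affine function of the accuracy $\prob(\cA(Q_S(z))=m)$, so maximizing the advantage over all $\cA:\cQ\to\{-1,1\}$ is the same as maximizing this accuracy. Conditioning on the value of the query, I would write $\prob(\cA(Q_S(z))=m) = \E_u\!\left[\prob(m=\cA(u)\mid Q_S(z)=u)\right]$, where the outer expectation is over the marginal law of $u:=Q_S(z)$ induced by $\cE(\cD,S,Q_S)$. Since the adversary sees only $Q_S(z)$, we may optimize the integrand pointwise: for each fixed $u$ it is maximized by taking $\cA(u)=1$ when $\prob(m=1\mid Q_S(z)=u)\ge \tfrac12$ and $\cA(u)=-1$ otherwise, i.e. by $\cA^*(u)=\sign\!\big(\prob(m=1\mid Q_S(z)=u)-\tfrac12\big)$. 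This is exactly the Bayes optimal classifier of Eq.~\ref{eq:bayes_acc} with feature $Q_S(z)$ and label $m$, establishing Eq.~\ref{eqn:bayesoptimal}; the tie at $\tfrac12$ may be broken either way without changing the accuracy.

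Second, substituting $\cA^*$ back and writing $\eta(u):=\prob(m=1\mid Q_S(z)=u)$ gives $\prob(\cA^*(Q_S(z))=m) = \E_u\!\left[\max(\eta(u),1-\eta(u))\right]$. Using the identity $\max(\eta,1-\eta)=\tfrac12+\tfrac12|2\eta-1|$, this equals $\tfrac12+\tfrac12\,\E_u[\,|2\eta(u)-1|\,]$, hence $\Adv_p(Q_S)=\E_u[\,|2\eta(u)-1|\,]=\E_z[\,|2\eta(Q_S(z))-1|\,]$, the last step because the integrand depends on $z$ only through $Q_S(z)$. Finally, Bayes' rule gives $\eta(u)=\frac{\prob(Q_S(z)=u\mid m=1)p}{\prob(Q_S(z)=u\mid m=1)p+\prob(Q_S(z)=u\mid m=-1)(1-p)}$, where the class conditionals are read as densities against a common dominating measure (precisely the absolute-continuity hypothesis already assumed in the paper), and a one-line simplification yields $2\eta(u)-1=f_p(z)$ as in Eq.~\ref{eq:privloss}, which is Eq.~\ref{eq:gen2}.

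The only real subtlety is measure-theoretic bookkeeping: making the conditioning step rigorous through a regular conditional distribution (disintegration) of $m$ given $Q_S(z)$, checking that the denominator of $f_p$ is positive for almost every $z$ so that $\eta$ and $f_p$ are well defined, and noting that the tie-breaking convention at $\eta=\tfrac12$ is immaterial. Everything else is the standard $0$--$1$-loss Bayes-risk computation specialized to a binary label with prior $p$, so I do not anticipate any genuine difficulty there.
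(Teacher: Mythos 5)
Your proposal is correct and follows essentially the same route as the paper's own (very terse) proof: reduce advantage maximization to accuracy maximization, identify the pointwise maximizer as the Bayes classifier $\sign\left(\prob(m=1\mid Q_S(z))-\tfrac12\right)$, and plug it back in, with Bayes' rule giving $2\eta-1=f_p(z)$. You simply spell out the computation that the paper delegates to a citation, and your attention to the dominating-measure and tie-breaking details is consistent with the assumptions the paper already makes.
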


\begin{proof}
See the complete proof in Appendix \ref{sec:proof:lemma:bayes}
\end{proof}

Now we introduce the individual privacy risk at a sample $z_0$. It is proportional to the accuracy of the Bayesian optimal classifier $\cA^*$ conditioning on  $Q_S(z)=Q_S(z_0)$.
\begin{definition}
\label{def:individualprivacy}
The individual privacy risk of a sample $z_0$ for a query $Q_S$ under the accuracy-based metric is defined as:
\begin{equation*}
\AdvI_p(Q_S,z_0)=2 \mathbb{P}[\cA^*(Q_S(z))=m|Q_S(z)=Q_S(z_0)]-1,
\end{equation*}
where the probabilities are taken over the random sample $(z,m)$ drawn from $\cE(\cD,S,Q_S)$.
\end{definition}
We then provide several convenient properties of the individual privacy risk $\AdvI_p(Q_S,z_0)$ under the accuracy-based metric at a sample $z_0$. First, we show that it could be rewritten as $|f_p(z_0)|$, where $f_p$ is defined in Equation \ref{eq:privloss}. This can be used in the following sections to derive a consistent estimator and confidence interval. Secondly, we connect the individual privacy risk to the optimal membership advantage. Thirdly, we show that $\AdvI_p(Q_S,z_0)$ is proportional to the highest accuracy conditioning on $Q_S(z_0)$. Fourthly, we establish the connection between the optimal membership advantage, the individual privacy risk and Differential Privacy.
\begin{thm}
\label{prop:AdvI}
Let $Q_S$ be the query function and $z_0$ a fixed sample $\in \cZ$.
\begin{enumerate}
    \item The individual privacy risk at $z_0$ can be re-written as
\begin{equation*}
\AdvI_p(Q_S,z_0)=|f_p(z_0)|
\end{equation*}
\item The optimal membership advantage is the expectation of the individual privacy risk given the sample $z$ as a random variable. 
\begin{equation}\label{eqn:fp}
\Adv_p(Q_S)=\mathbb{E}_z[\AdvI_p(Q_S,z)]
\end{equation}
\item The Bayesian optimal classifier $\cA^*$ is optimal at the individual-level, as 
\begin{equation*}
\AdvI_p(Q_S,z_0)=2\max_{\cA} \mathbb{P}[\cA(Q_S(z))=m|Q_S(z)=Q_S(z_0)]-1
\end{equation*}
\item 	If a training algorithm is $\varepsilon$-differentially private, then for any choice of $z_0$, we have both the optimal membership advantage and  the individual privacy risk bounded by a constant determined by $\varepsilon$:
	\begin{eqnarray*}
	     \Adv_p(Q_S),\AdvI_p(Q_S,z_0)\leq \max\left\{\left|\tanh\left(\frac{\varepsilon+\lambda}{2}\right)\right|, \left|\tanh\left(\frac{-\varepsilon+\lambda}{2}\right)\right|\right\},
	\end{eqnarray*}
	where $\lambda:=\log{\left(\frac{p}{1-p}\right)}$. When $p=0.5$ and $\varepsilon=1$, we have $\Adv_p(Q_S), ,\AdvI_p(Q_S,z_0)\leq 0.462$ for any $z_0\in\cZ$.
\end{enumerate} 
\end{thm}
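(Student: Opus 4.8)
The plan is to prove the four parts in the order listed, since each leans on the previous one. For part~1, I would start from Definition~\ref{def:individualprivacy} and unpack the conditional probability $\prob[\cA^*(Q_S(z))=m \mid Q_S(z)=Q_S(z_0)]$. Conditioning on the event $Q_S(z)=Q_S(z_0)$, the optimal classifier $\cA^*$ outputs $\sign(\prob(m=1\mid Q_S(z_0))-\tfrac12)$ by Lemma~\ref{lemma:bayes}, so the conditional accuracy equals $\max\{\prob(m=1\mid Q_S(z_0)),\,\prob(m=-1\mid Q_S(z_0))\}$. Then $2\max\{a,1-a\}-1=|2a-1|=|a-(1-a)|$, and a Bayes-rule rewriting of $\prob(m=1\mid Q_S(z_0))-\prob(m=-1\mid Q_S(z_0))$ in terms of the class-conditional densities $\prob(Q_S(z_0)\mid m=\pm1)$ and the prior $p$ gives exactly $f_p(z_0)$ as in Equation~\ref{eq:privloss}. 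This is essentially the same computation already done inside the proof of Lemma~\ref{lemma:bayes}, just localized at a single point rather than integrated, so I expect it to be short.

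Part~2 then follows immediately: by Equation~\ref{eq:gen2} of Lemma~\ref{lemma:bayes} we have $\Adv_p(Q_S)=\E_z[|f_p(z)|]$, and by part~1 the integrand $|f_p(z)|$ is exactly $\AdvI_p(Q_S,z)$, so $\Adv_p(Q_S)=\E_z[\AdvI_p(Q_S,z)]$. Part~3 is a pointwise optimality statement: for any adversary $\cA$, conditioning on $Q_S(z)=Q_S(z_0)$, the quantity $\prob[\cA(Q_S(z))=m\mid Q_S(z)=Q_S(z_0)]$ is a conditional accuracy that is maximized by predicting the more likely label, which is precisely what $\cA^*$ does; hence the supremum over $\cA$ equals the conditional accuracy of $\cA^*$, and multiplying by $2$ and subtracting $1$ recovers $\AdvI_p(Q_S,z_0)$. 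This is the conditional analogue of the standard Bayes-classifier argument and needs only that $\cA$ depends on $z$ only through $Q_S(z)$.

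Part~4 is where the real work is, and I expect it to be the main obstacle. The strategy is to bound $|f_p(z_0)|$ using $\varepsilon$-differential privacy of the training algorithm. Writing $r(z_0):=\prob(Q_S(z_0)\mid m=1)/\prob(Q_S(z_0)\mid m=-1)$ for the likelihood ratio, one has $f_p(z_0)=\dfrac{r(z_0)\,\tfrac{p}{1-p}-1}{r(z_0)\,\tfrac{p}{1-p}+1}$, which is an increasing function of $\log r(z_0)+\lambda$ with $\lambda=\log\!\big(\tfrac{p}{1-p}\big)$; indeed $f_p(z_0)=\tanh\!\big(\tfrac{\log r(z_0)+\lambda}{2}\big)$. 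The delicate point is relating $\log r(z_0)$ to $\varepsilon$: differential privacy of the learning algorithm on neighboring datasets should translate (via the fact that $S$ versus $\cD\setminus S$ membership changes a single record's presence, together with the data-processing behavior of the query $Q_S$) into the bound $|\log r(z_0)|\le\varepsilon$. Given that, monotonicity of $\tanh$ and its oddness give $|f_p(z_0)|\le\max\{|\tanh(\tfrac{\varepsilon+\lambda}{2})|,|\tanh(\tfrac{-\varepsilon+\lambda}{2})|\}$, since $\log r(z_0)+\lambda$ ranges over $[-\varepsilon+\lambda,\varepsilon+\lambda]$ and $|\tanh|$ on an interval is maximized at an endpoint. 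The bound on $\Adv_p(Q_S)=\E_z[\AdvI_p(Q_S,z)]$ follows because the same constant bounds every $\AdvI_p(Q_S,z)$. The numerical check at $p=0.5$, $\varepsilon=1$ reduces to $\tanh(1/2)\approx0.462$, since $\lambda=0$ there. I would want to be careful in the write-up about precisely which "$\varepsilon$-DP $\Rightarrow |\log r(z_0)|\le\varepsilon$" implication is being invoked — this is the step most likely to need an auxiliary lemma or a careful statement of how DP of the trainer interacts with the sampling model in Experiment~\ref{exp:2}.
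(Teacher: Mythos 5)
Your proposal is correct and follows essentially the same route as the paper's proof: parts 1--3 via the pointwise Bayes-classifier computation, part 2 by combining part 1 with Equation~\ref{eq:gen2}, and part 4 via the post-processing property of differential privacy to bound the log-likelihood ratio by $\varepsilon$, followed by the identity $\frac{x-1}{x+1}=\tanh\left(\frac12\ln x\right)$ (your $f_p(z_0)=\tanh\left(\frac{\log r(z_0)+\lambda}{2}\right)$ is exactly this with the prior odds folded in). The one step you flag as delicate --- that $\varepsilon$-DP of the trainer implies $|\log r(z_0)|\le\varepsilon$ for the query output --- is handled in the paper by the same one-line appeal to post-processing, so your caution is warranted but you are not missing anything the paper supplies.
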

\begin{proof}
See the complete proof in Appendix \ref{sec:proof:prop:AdvI}
\end{proof}

\subsection{For the generalized metrics}
\label{sec: gen_metric_ma}
As previously mentioned, membership privacy leakage is a highly imbalanced problem \citep{jayaraman2020revisiting, rezaei2020towards}. For example, the training set in a medical dataset may consist of data from the patients admitted to a clinical study with a particular health condition and the distribution $\cD$ may represent data from all patients (in the world). Notably, previous works \citep{hayes2019logan,hilprecht2019monte,mukherjee2019protecting} used metrics such as accuracy, precision and recall to measure the privacy risks even for the highly imbalanced setting where $p=0.1$ \citep{hayes2019logan}. Although these attacks result in high accuracy, precision or recall during the privacy evaluation stage,  it has been shown to suffer from a high false positive rate\citep{rezaei2020towards}, thus is less useful in practice.

To overcome these issues, prior work  \citep{jayaraman2019evaluating, jayaraman2020revisiting} proposes to use the positive predictive value (PPV), which is defined as the ratio of true members predicted among all the positive membership predictions made by an adversary $\cA$. Here, we allow users even more flexibility by adopting the generalized metric defined in Equation \ref{eq:metric}. Through Experiment \ref{exp:2}, we define the  following  metric to measure the membership privacy risk under the generalized metric.
\begin{definition}
The membership advantage of an adversary $\cA$  that has access to a trained model via a query $Q_S$ under the generalized metric $\ell$ is defined as

		\begin{equation*}
	\Adv_{\ell, p}(Q_S, \cA) = \ell(\cA(Q_S(\cdot)),\cE(\cD,S,Q_S)), 
\end{equation*}
where $\cE(\cD,S,Q_S)$ is the distribution generated by Experiment  \ref{exp:2}. 
\end{definition}
After introducing the membership advantage under the generalized metric $l$, we define the optimal membership advantage as the maximum membership advantage. 
\begin{definition}
The optimal membership advantage is defined as
\begin{equation*}
	\Adv_{\ell,p}(Q_S) = \max_\cA\Adv_{\ell,p}(Q_S, \cA). 
\end{equation*}

\end{definition}

Similar to the accuracy-based metric, the optimal adversary for the generalized metric is  the Bayes optimal classifier. 
While we mention that the exact function depends on the dataset and the metric, in some cases there exists a closed form solution for it.
\begin{lemma}
\label{lemma:bayes_generalized}
Given the query $Q_S$, the data distribution $\cD$, the training set $S$ and the prior probability $p$, if $b_{11}=b_{01}$ and $b_{10}=b_{00}$, then the Bayes optimal classifier $\cA^*$ maximizing membership advantage under the generalized metric $l$ asymptotically is given by
\begin{eqnarray}
\label{eqn:bayesoptimal}
	\cA^*(Q_S(z)) = \sign\left(\prob\left(m=1|Q_S(z)\right)-t_\ell \right)\;,
\end{eqnarray}
where the probability is taken over the sample pair $(z,m)$ drawn from $\cE(\cD,S,Q_S)$ and $t_\ell=\frac{a_{00}-a_{10}}{a_{11}-a_{10}-a_{01}+a_{00}}$.
\end{lemma}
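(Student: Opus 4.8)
The plan is to recognize the membership-inference problem under the generalized metric $\ell$ as an instance of the cost-sensitive classification setup of \citet{koyejo2014consistent}, and then to pin down the threshold $t_l$ explicitly using the two extra hypotheses. In Experiment~\ref{exp:2} the pair $(Q_S(z),m)$ drawn from $\cE(\cD,S,Q_S)$ plays the role of ``feature--label'', the adversary $\cA:\cQ\to\{-1,1\}$ is the classifier, and $\mathrm{TP},\mathrm{FP},\mathrm{FN},\mathrm{TN}$ are exactly the four population quantities appearing in Equation~\ref{eq:metric}. Hence Equation~\ref{eq:bayes_gen} already guarantees that an optimal $\cA^*$ has the form $\sign(\eta-t_\ell)$ or $\sign(t_\ell-\eta)$ with $\eta(q)=\prob(m=1\mid Q_S(z)=q)$ (the marginal of $Q_S(z)$ being absolutely continuous, as assumed throughout); what remains is to compute $t_\ell$ when $b_{11}=b_{01}$ and $b_{10}=b_{00}$.

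For that step I would use the two adversary-independent identities $\mathrm{TP}+\mathrm{FN}=\prob(m=1)=:\pi$ and $\mathrm{FP}+\mathrm{TN}=\prob(m=-1)=1-\pi$. Under the stated hypotheses the denominator of $\ell$ becomes $b_0+b_{11}\pi+b_{10}(1-\pi)$, a constant independent of $\cA$, so maximizing $\ell$ is equivalent to maximizing its numerator (up to the sign of this constant, which I take to be positive, as is standard for these metrics). Eliminating $\mathrm{FN}=\pi-\mathrm{TP}$ and $\mathrm{FP}=(1-\pi)-\mathrm{TN}$ rewrites the numerator as $\big(a_{11}-a_{01}\big)\mathrm{TP}+\big(a_{00}-a_{10}\big)\mathrm{TN}$ plus an $\cA$-independent constant.

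I would then express $\mathrm{TP}=\E_z\big[\mathbb{I}(\cA(Q_S(z))=1)\,\eta(Q_S(z))\big]$ and $\mathrm{TN}=\E_z\big[\mathbb{I}(\cA(Q_S(z))=-1)\,(1-\eta(Q_S(z)))\big]$, so that the part of the objective depending on $\cA$ is $\E_z\big[\mathbb{I}(\cA(Q_S(z))=1)\,h(Q_S(z))\big]$ with $h(q)=(a_{11}-a_{01})\eta(q)-(a_{00}-a_{10})(1-\eta(q))$. Pointwise maximization then forces $\cA^*(Q_S(z))=1$ precisely when $h(Q_S(z))>0$, i.e.\ when $\eta(Q_S(z))>\frac{a_{00}-a_{10}}{a_{11}-a_{10}-a_{01}+a_{00}}=t_l$, which is the claimed $\sign(\prob(m=1\mid Q_S(z))-t_l)$ (the inequality, hence the sign, flipping when $a_{11}-a_{10}-a_{01}+a_{00}<0$, which is why the abstract result in Equation~\ref{eq:bayes_gen} also permits the form $\sign(t_l-\eta)$). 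On the set $\{h=0\}$ the two labels contribute equally to the objective, so ties may be broken arbitrarily without affecting the value; this, together with the fact that $\ell$ is a ratio of population expectations whose optimizer is only approached by plug-in estimates, is what the word ``asymptotically'' refers to.

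The main obstacle I anticipate is bookkeeping rather than conceptual depth: keeping track of the sign of $a_{11}-a_{10}-a_{01}+a_{00}$ (and of the constant denominator) so that the two admissible forms come out correctly, and verifying that \emph{both} hypotheses $b_{11}=b_{01}$ and $b_{10}=b_{00}$ are genuinely needed --- each is exactly what kills the dependence of one of the two row sums $\mathrm{TP}+\mathrm{FN}$ and $\mathrm{FP}+\mathrm{TN}$ in the denominator. A minor secondary check is that the $t_l$ obtained here coincides with the $t_\ell\in(0,1)$ provided abstractly by \citet{koyejo2014consistent}; this is immediate once one specializes their characterization to the case of an affine numerator and a constant denominator.
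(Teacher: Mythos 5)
Your proposal is correct and follows essentially the same route as the paper's proof: use $b_{11}=b_{01}$ and $b_{10}=b_{00}$ to make the denominator of $\ell$ the adversary-independent constant $b_0+b_{11}p+b_{00}(1-p)$, reduce to maximizing a linear combination of $\mathrm{TP},\mathrm{FP},\mathrm{FN},\mathrm{TN}$ (cost-sensitive classification), and threshold $\eta$ at $t_l=\frac{a_{00}-a_{10}}{a_{11}-a_{10}-a_{01}+a_{00}}$. The only difference is that the paper cites \citet{elkan2001foundations} for the pointwise decision rule while you derive it explicitly (and, usefully, you track the sign of $a_{11}-a_{10}-a_{01}+a_{00}$, which the paper's one-line rearrangement glosses over).
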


\begin{proof}
See the complete proof in Appendix \ref{sec:proof:lemma:bayes_generalized}
\end{proof}

As a sanity check, for accuracy, we have $t_{\rm ACC}=\frac{1}{2}$. A wide range of performance measure under class imbalance settings can be seen as instances of the family of metrics under Lemma \ref{lemma:bayes_generalized}. For example, AM measure \citep{menon2013statistical} defined as ${\rm AM}:=1/2\;\left(\mathrm{TPR}+\mathrm{TNR}\right)$ has optimal threshold $t_{\mathrm{ AM}}=p$. As another example, TPR or recall has the optimal threshold $t_{\rm TPR}=0$, which means in practice, we can always predict as positive to get the highest recall.

Similar to the case of the accuracy  based metric, we  define the individual privacy risk for the generalized metric as follows: 
\begin{definition}\label{def:advI:g}
The individual privacy risk of a sample $z_0$ for a query $Q_S$ under generalized metric $\ell$ is defined as
	\begin{equation*}\AdvI_{l,p}(Q_S,z_0)=\ell(\cA^*(Q_S(\cdot)), \cE(\cD,S,Q_S|Q_S(z_0)),\end{equation*}

\end{definition}
where $\cA^*$ is the Bayes Optimal Adversary under the generalized metris and $\cE(\cD,S,Q_S|Q_S(z_0)$ is the distribution generated by Experiment  \ref{exp:2} given $Q_S(z)=Q_S(z_0)$.
\begin{remark}
While we focus on a single query function in this section, the result can be easily extended to a set of queries $\{Q_S^1,Q_S^2,\cdot,Q_S^q\}$ by computing the maximum risk of all the queries $Q_S^i$, $i=1,\cdots,q$.
\end{remark}

\section{Estimation of the Individual Privacy Risk}
\label{sec:indiv}

 In the ideal world, we have a full access to the data distribution $\cD$. As a result, we could get the exact numbers to describe the individual privacy risk and the optimal membership advantage by Definition \ref{def:advoptimal2}- \ref{def:advI:g}. However, this could hardly be true in practice. For example, it is almost impossible to get access to the health records of all the patients around the world, while a simple random sample could be feasible.  Previous studies  \citep{yeom2018privacy,jayaraman2020revisiting} dealt with this by assuming the existence of a subset to fully represent a distribution. For instance, \citet{jayaraman2020revisiting} used a Texas hospital data set consisting of 67,000 patient records as the data distribution to validate their framework. However, their framework could hardly be adapted to the case when the 67,000 patient records are only part of the population of interest.

To conquer this limitation, we extend to the situation when one can only access the training set $S$ and the data distribution $\cD$ via a simple random sample. Moreover, we will provide a practical and principled approach to estimate the individual privacy risk for both the accuracy-based and the generalized metrics, In order to achieve this, we first propose an experiment. Note that this experiment would also be used to estimate the optimal membership advantage in the following section. 
\begin{experiment}
\label{exp:1}
Let 
$\cD$ be the data distribution on $\cZ$. We first have a training set $S\sim \cD^n$ with size $|S|=n$ and have a trained model
. An adversary would access the trained model via the query function $Q_S:\cZ \rightarrow \cQ$. The ratio $p$ is the prior probability. The sampling process proceeds as follows:
	\begin{enumerate}
        \item Uniformly sample $N_1=Np$ points $x_1,\cdots,x_{N_1}$ from $S$. Assign $z_i=x_i$ and $m_i=1$, for $i=1,\cdots, N_1$.
        \item Uniformly sample $N_2=(1-p)N$ points $y_1,\cdots,y_{N_2}$  from $\cD\backslash S$. Assign $z_i=y_{i-N_1}$ and $m_i=-1$, for $i=N_1+1,\cdots, N_2$. 
        \item  Create a consolidated set of samples $\{(z_i,m_i)\}_{i=1}^N$, where $m_i=-1$ if $i\leq N_1$ and $m_i=0$ if $i\geq N_1$.
    \end{enumerate}

\end{experiment}
 We assume that $Np$ and $N(1-p)$  are always integers for simplicity in this paper. But keep in mind that all the algorithms, lemmas and theorems could be easily extended to the case when this assumption does not hold. 

\subsection{For the accuracy-based metric}
By Theorem \ref{prop:AdvI}, we have $\AdvI_p(Q_S,z)=|f_p(z)|$. Thus, it is sufficient to estimate $f_p(z)$ to estimate $\AdvI_p(Q_S,z)$. In the following sub-sections, we describe the construction of consistent estimators for $f_p(z)$. 
%
\subsubsection{Discrete queries}
\label{estim_disc_indiv}
When $Q_S(z)\in \cQ$ is discrete, for a particular output $j\in \cQ$, let $r_j=\prob(Q_S(z)=j|m=1)$ and $q_j=\prob(Q_S(z)=j|m=-1)$. Frequency-based plug-in methods have been used empirically  in similar settings\citep{mukherjee2019protecting, yaghini2019disparate}. We simply collect samples from Experiment \ref{exp:1} and plug-in the fraction to estimate $r_j$ and $q_j$. Then we account for the estimation error of this process by using a Clopper-Pearson confidence interval \citep{clopper1934use}. We find the $\frac{\delta}{2}$-Clopper Pearson lower bound  for $r_j$ denoted as $\hat{r}_{j,\rm{lower}}$ and the  $\frac{\delta}{2}$-Clopper Pearson upper bound for $r_j$ denoted as $\hat{r}_{j,\rm{upper}}$. Finally we have the following theorem.
 
  \begin{thm}
  \label{lemma:disc} Let $\{X_1,\cdots,X_{N_1},Y_1,\cdots,Y_{N_2}\}$ randomly sampled by Experiment \ref{exp:1}. If $Q_S(z)=j$, then 
 \begin{enumerate}
 \item A consistent estimator of $f_p(z)$ is given as follows:  
 \begin{equation*}
      \frac{\hat{r}_{j}p -\hat{q}_{j}(1-p)}{\hat{r}_{j}p +\hat{q}_{j}(1-p)},
 \end{equation*}
 where $\hat{r}_j=\frac{1}{N_1}\sum_{i=1}^{N_1}\mathbb{I}(Q_S (X_i) = j)$, and $\hat{q}_j=\frac{1}{N_2}\sum_{i=1}^{N_2}\mathbb{I}(Q_S (Y_i) = j)$.
     \item The $(1-\delta)$-confidence interval $ C_{1-\delta}(z) $ of $f_p(z)$ is
 	\begin{eqnarray*}
 	\left [\frac{\hat{r}_{j,\rm{lower}}p -\hat{q}_{j,\rm{upper}}(1-p)}{\hat{r}_{j,\rm{lower}}p +\hat{q}_{j,\rm{upper}}(1-p)}, \frac{\hat{r}_{j,\rm{upper}}p -\hat{q}_{j,\rm{lower}}(1-p)}{\hat{r}_{j,\rm{upper}}p +\hat{q}_{j,\rm{lower}}(1-p)}
 	\right]\;.
 	\end{eqnarray*}
 \end{enumerate}
 \end{thm}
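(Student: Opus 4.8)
The plan is to prove the two claims of Theorem~\ref{lemma:disc} separately, both reducing to the structure of $f_p(z)$ expressed in terms of the discrete conditional masses $p_j$ and $q_j$. By Equation~\ref{eq:privloss}, when $Q_S(z)=j$ we have
$$f_p(z)=\frac{p_j p - q_j (1-p)}{p_j p + q_j (1-p)},$$
so the whole problem is to estimate the one-dimensional rational function $g(u,v)=\frac{up-v(1-p)}{up+v(1-p)}$ at the point $(u,v)=(p_j,q_j)$. First I would record that $\hat p_j$ and $\hat q_j$ are sample means of i.i.d. Bernoulli indicators (over the $N_1$ training draws and $N_2$ hold-out draws of Experiment~\ref{exp:1} respectively), hence by the strong law of large numbers $\hat p_j\to p_j$ and $\hat q_j\to q_j$ almost surely as $N_1,N_2\to\infty$.

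**Part 1: consistency.**

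For the first claim, I would invoke the continuous mapping theorem: the map $(u,v)\mapsto g(u,v)$ is continuous on a neighborhood of $(p_j,q_j)$ provided the denominator $up+v(1-p)$ does not vanish there. Since $p\in(0,1)$ and $p_j,q_j\ge 0$ with $p_jp+q_j(1-p)>0$ (this is exactly the normalizing probability $\prob(Q_S(z)=j)$, which is positive whenever $j$ is an attainable query value — and if it were zero the conditioning event would have probability zero, so we may assume it positive), continuity holds at $(p_j,q_j)$. Composing with $\hat p_j\to p_j$, $\hat q_j\to q_j$ gives $g(\hat p_j,\hat q_j)\to g(p_j,q_j)=f_p(z)$ almost surely, which is consistency. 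A remark that the estimator is also the plug-in MLE under the multinomial model would be worth adding but is not needed.

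**Part 2: the confidence interval.**

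For the second claim I would argue monotonicity. Fix $p$ and view $g$ as a function of $u=p_j$ with $v=q_j$ held fixed: writing $g=1-\frac{2v(1-p)}{up+v(1-p)}$ shows $g$ is strictly increasing in $u$. Symmetrically, writing $g=-1+\frac{2up}{up+v(1-p)}$ shows $g$ is strictly decreasing in $v$. Therefore $g$ is maximized over the rectangle $[\hat p_{j,\rm{lower}},\hat p_{j,\rm{upper}}]\times[\hat q_{j,\rm{lower}},\hat q_{j,\rm{upper}}]$ at the corner $(\hat p_{j,\rm{upper}},\hat q_{j,\rm{lower}})$ and minimized at $(\hat p_{j,\rm{lower}},\hat q_{j,\rm{upper}})$ — which are exactly the two endpoints in the stated interval. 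It remains to control the coverage probability: the Clopper–Pearson construction gives $\prob(p_j\ge \hat p_{j,\rm{lower}})\ge 1-\delta/2$ and $\prob(q_j\le \hat q_{j,\rm{upper}})\ge 1-\delta/2$ for the lower endpoint (and the mirror statements for the upper endpoint), so a union bound over these two events, combined with the monotonicity just established, shows that on an event of probability at least $1-\delta$ the true value $f_p(z)=g(p_j,q_j)$ lies between the two corner values. Hence $C_{1-\delta}(z)$ has coverage at least $1-\delta$.

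**Anticipated obstacle.**

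The routine parts — SLLN, continuous mapping, the algebra of monotonicity — are all straightforward. The one place that needs care is the union-bound step for the confidence interval: one must be slightly careful that the \emph{same} $1-\delta/2$-level lower bound $\hat p_{j,\rm{lower}}$ and upper bound $\hat q_{j,\rm{upper}}$ are used to build the interval's left endpoint, so that covering $f_p(z)$ from below requires only the two events $\{p_j\ge\hat p_{j,\rm{lower}}\}$ and $\{q_j\le\hat q_{j,\rm{upper}}\}$, and symmetrically for the right endpoint; since each interval endpoint is itself built from a \emph{pair} of one-sided $\delta/2$ bounds, and the interval as a whole uses all four one-sided bounds, the honest statement is that a union bound over all four events (each failing with probability $\le\delta/2$) would naively give only $1-2\delta$. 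The fix is to observe that the left-endpoint-fails event and right-endpoint-fails event each need only two of the four one-sided bounds, so one should bound $\prob(f_p(z)\notin C_{1-\delta}(z))$ directly by $\prob(\{p_j<\hat p_{j,\rm{lower}}\}\cup\{q_j>\hat q_{j,\rm{upper}}\}\cup\{p_j>\hat p_{j,\rm{upper}}\}\cup\{q_j<\hat q_{j,\rm{lower}}\})\le 4\cdot\delta/2$ — so in fact one should take $\delta/4$-Clopper–Pearson bounds, or equivalently restate the theorem's coverage as $1-2\delta$; I would flag this and follow whichever convention the appendix adopts, as it is the only genuinely delicate bookkeeping point in the argument.
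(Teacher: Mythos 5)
Your proposal is correct and follows essentially the same route as the paper: part 1 is the law of large numbers plus continuity of the map at a point where the denominator $p_jp+q_j(1-p)>0$, and part 2 is Clopper--Pearson intervals combined with monotonicity (the paper phrases the monotonicity by dividing through by $q_j$ and using that $\frac{x-1}{x+1}$ is increasing, which is equivalent to your coordinatewise argument) and a union bound. The calibration issue you flag at the end is resolved by the paper's convention: each of the two Clopper--Pearson intervals, for $p_j$ and for $q_j$, is a two-sided interval at level $1-\delta/2$, so the union bound is over just those two failure events and yields coverage $1-\delta$ exactly as stated.
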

 \begin{proof}
 See the complete proof in Appendix \ref{sec:proof:lemma:disc} 
 \end{proof}
 \subsubsection{Continuous queries}
Consider when $Q_S$ is a continuous query. Let $r(x)=\prob(Q_S(z)=x|m=1)$ and $q(x)=\prob(Q_S(z)=x|m=-1)$.  We first use Kernel Density Estimators (KDE) for both $r(x)$ and $q(x)$.

Recall that for samples $v_1, v_2,\cdots, v_N\in \reals^d$ from an unknown distribution $\cR$ defined by a probability density function $r$, a KDE with bandwidth $h$ and kernel function $K$ is given by
 \begin{eqnarray}
 \label{eqn: KDE_def}
 	\frac{1}{N h^{d}} \sum_{i=1}^{N} K\left(\frac{v-v_{i}}{h}\right)\;.
 \end{eqnarray}
 
Additionally, we have a plug-in confidence interval for KDE \citep{chen2017tutorial} (see Lemma \ref{lemma:kde:ci}). Using this, we have a consistent estimator for $f_p(z)$ with a confidence interval to estimate the uncertainty.
 \begin{thm}
   \label{lemma:cont}
    Let $Q_S$ be a continuous query and $z\in \cZ$. Let $\{X_1,\cdots,X_{N_1},Y_1,\cdots,Y_{N_2}\}$ randomly sampled from Experiment \ref{exp:1}.
   \begin{enumerate}
   \item A consistent estimator of $f_p(z)$ is given as follows:  
   \begin{equation*}
       \frac{\widehat{r}_{N}(Q_S(z))p -\widehat{q}_{N}(Q_S(z))(1-p)}{\widehat{r}_{N}(Q_S(z))p +\widehat{q}_{N}(Q_S(z))(1-p)},
   \end{equation*}
   where $
 	\widehat{r}_{N}(Q_S(z))=\frac{1}{N_1 h^{d}} \sum_{i=1}^{N_1} K\left(\frac{Q_S(z)-Q_S(X_i)}{h}\right)$, and $
 	\widehat{q}_{N}(Q_S(z))=\frac{1}{N_2 h^{d}} \sum_{i=1}^{N_2} K\left(\frac{Q_S(z)-Q_S(Y_i)}{h}\right)$.
       \item  The $(1-\delta)$-confidence interval $ C_{1-\delta}(z) $ of $f_p(z)$ is
 	\begin{eqnarray*}
 	\left [\frac{\widehat{r}_{\rm lower}(Q_S(z))p -\widehat{q}_{\rm upper}(Q_S(z))(1-p)}{\widehat{r}_{\rm lower}(Q_S(z))p +\widehat{q}_{\rm upper}(Q_S(z))(1-p)}, 
 	\frac{\widehat{r}_{\rm upper}(Q_S(z))p -\widehat{q}_{\rm lower}(Q_S(z))(1-p)}{\widehat{r}_{\rm upper}(Q_S(z))p +\widehat{q}_{\rm lower}(Q_S(z))(1-p)}\right],
 	\end{eqnarray*}
 	where $[\widehat{r}_{\rm lower},\widehat{r}_{\rm upper}],[\widehat{q}_{\rm lower},\widehat{q}_{\rm upper}]$ are $(1-\delta/2)$ confidence intervals of $r(Q_S(z))$ and $q(Q_S(z))$ respectively. Both confidence intervals are derived by Lemma \ref{lemma:kde:ci}.
   \end{enumerate}

 \end{thm}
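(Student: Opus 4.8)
The plan is to reduce both claims to two ingredients: (i) the kernel density estimators $\widehat p_N$ and $\widehat q_N$ are pointwise consistent for the conditional densities $p(\cdot)=\prob(Q_S(z)=\cdot\mid m=1)$ and $q(\cdot)=\prob(Q_S(z)=\cdot\mid m=-1)$, with the asymptotically valid plug-in confidence intervals of Lemma~\ref{lemma:kde:ci}; and (ii) by Equation~\ref{eq:privloss}, $f_p(z)=g\big(p(Q_S(z)),q(Q_S(z))\big)$ where $g(a,b)=\frac{ap-b(1-p)}{ap+b(1-p)}$. Once this is in place, everything is a matter of understanding the explicit map $g$ on the region $a,b\ge 0$.

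For part 1, note that by construction in Experiment~\ref{exp:1} the values $Q_S(X_1),\dots,Q_S(X_{N_1})$ are i.i.d.\ draws from the law with density $p$ and $Q_S(Y_1),\dots,Q_S(Y_{N_2})$ are i.i.d.\ from the law with density $q$, so $\widehat p_N(Q_S(z))$ and $\widehat q_N(Q_S(z))$ are ordinary KDEs evaluated at the fixed point $Q_S(z)$. Under the usual bandwidth regime ($h=h_N\to 0$, $Nh^d\to\infty$) together with the continuity/absolute-continuity assumption already standing in the paper, classical KDE consistency gives $\widehat p_N(Q_S(z))\to p(Q_S(z))$ and $\widehat q_N(Q_S(z))\to q(Q_S(z))$ in probability. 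Since $g$ is continuous wherever its denominator $ap+b(1-p)$ is positive, and this denominator is positive at $(p(Q_S(z)),q(Q_S(z)))$ because $Q_S(z)$ lies in the support of at least one of the two conditional laws, the continuous mapping theorem (Slutsky) yields consistency of the plug-in $g\big(\widehat p_N(Q_S(z)),\widehat q_N(Q_S(z))\big)$ for $f_p(z)$.

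For part 2, the key observation is the monotonicity of $g$: rewriting $g(a,b)=1-\frac{2b(1-p)}{ap+b(1-p)}=\frac{2ap}{ap+b(1-p)}-1$ shows that $g$ is nondecreasing in $a$ and nonincreasing in $b$ on $\{a,b\ge 0\}$ (with the denominator positive). By Lemma~\ref{lemma:kde:ci}, $[\widehat p_{\rm lower},\widehat p_{\rm upper}]$ and $[\widehat q_{\rm lower},\widehat q_{\rm upper}]$ are $(1-\delta/2)$ confidence intervals for $p(Q_S(z))$ and $q(Q_S(z))$; a union bound shows that with probability at least $1-\delta$ the pair $(p(Q_S(z)),q(Q_S(z)))$ lies in the rectangle $[\widehat p_{\rm lower},\widehat p_{\rm upper}]\times[\widehat q_{\rm lower},\widehat q_{\rm upper}]$. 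On that event, monotonicity forces $g$ to attain its minimum over the rectangle at the corner $(\widehat p_{\rm lower},\widehat q_{\rm upper})$ and its maximum at $(\widehat p_{\rm upper},\widehat q_{\rm lower})$, so $f_p(z)$ lies between those two corner values, which are precisely the endpoints of the claimed $C_{1-\delta}(z)$. This gives the $(1-\delta)$ coverage statement.

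The parts that need care rather than hard work: the corner argument requires the denominators $\widehat p_{\rm lower}p+\widehat q_{\rm upper}(1-p)$ and $\widehat p_{\rm upper}p+\widehat q_{\rm lower}(1-p)$ to be positive, which is again where positivity of the conditional densities at $Q_S(z)$ (already a standing hypothesis) is invoked; and the coverage guarantee is only asymptotic, since the plug-in KDE interval of Lemma~\ref{lemma:kde:ci} ignores smoothing bias unless the bandwidth is undersmoothed — this limitation is inherited from the cited KDE result rather than introduced by our argument. The genuinely delicate background point, which I would mention but not dwell on, is verifying that the conditional law of $Q_S(z)$ given $m=1$ (sampling from the fixed finite set $S$) fits the continuous-density framework; in the paper's setting this is handled by the blanket absolute-continuity assumption made after Equation~\ref{eq:bayes_gen}.
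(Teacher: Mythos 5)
Your proposal is correct and follows essentially the same route as the paper: KDE pointwise consistency plus Slutsky/continuous mapping (with positivity of the denominator) for part 1, and for part 2 the monotonicity of $\frac{x-1}{x+1}$ in the ratio $x=\frac{p\,p(Q_S(z))}{(1-p)\,q(Q_S(z))}$ — which is exactly your statement that $g$ is nondecreasing in $a$ and nonincreasing in $b$ — combined with a union bound over the two marginal intervals. Your additional remarks on denominator positivity, the asymptotic/bias caveat of the plug-in KDE interval, and the finite-support subtlety for the $m=1$ conditional law are reasonable points of care that the paper's terser proof leaves implicit.
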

\begin{proof}
See the complete proof in Appendix \ref{sec:proof:lemma:const}
\end{proof}

\subsection{For the generalized metrics}
\label{sec: gen_metric}
Similar to the case of the accuracy-based metric, we propose a consistent estimator to estimate the individual privacy risk for the generalized metric. Though there is not always a closed-form expression for the individual privacy risk for the generalized metric, we are able to provide an explicit formula given a few conditions. The following theorem provides a way to calculate the confidence interval for the generalized metric given a known $t_\ell$.

  \begin{thm}
  \label{lemma:opt}
 Let $c_1 = (a_0 + a_{01})$, $c_2 = (a_0 + a_{00})$, $c_3 = (a_{11} - a_{01})$, $c_4 = (a_{10} - a_{00})$, $d_1 = (b_0 + b_{01})$, $d_2 = (b_0 + b_{00})$, $d_3 = (b_{11} - b_{01})$, $d_4 = (b_{10} - b_{00})$, and $z_0$ be a sample $\in \cZ$. Under the following conditions:
  \begin{enumerate}
    \item The Bayes optimal classifier $\cA^*$ for the generalized metric $l$ can be written in the form of 
	$\sign(\eta(x)-t_\ell)$, and $t_\ell$ is a known constant;
    \item $\prob(m=1|Q_S(z)=Q_S(z_0)) \neq t_\ell$;
    \item $pd_1\prob(Q_S(z_0)|m=1)+d_2(1-p)\prob(Q_S(z_0)|m=-1)+pd_3\prob(Q_S(z_0)|m=1)\mathbb{I}(\prob(m=1|Q_s(z_0))> t_\ell)+d_4(1-p)\prob(Q_S(z_0|m=-1)\mathbb{I}(\prob(m=1|Q_s(z_0))> t_\ell)\neq0$;
    \end{enumerate}
    we have 
   \begin{enumerate}
      \item A consistent estimator of $\AdvI_{l,p}(Q_S,z_0)$ can be given as follows:
      \begin{equation}\label{eqn:AdvI:g:estimate}
\frac{c_1 \hat{r} p + c_2 \hat{q} (1-p) + c_3 \hat{r}p\mathbb{I}((1-t_\ell)p\hat{r} > t_\ell(1-p)\hat{q}) + c_4\hat{q} (1-p)\mathbb{I}((1-t_\ell)p\hat{r} > t_\ell(1-p)\hat{q})}{d_1 \hat{r} p + d_2 \hat{q} (1-p) + d_3 \hat{r} p \mathbb{I}((1-t_\ell)p\hat{r} > t_\ell(1-p)\hat{q}) + d_4 \hat{q} (1-p) \mathbb{I}((1-t_\ell)p\hat{r} > t_\ell(1-p)\hat{q})}.
      \end{equation}
If $\cQ$ is discrete and $Q_S(z_0)=j$, then $\hat{r}=\hat{r}_j$ and $\hat{q}=\hat{q}_j$, as defined in Theorem \ref{lemma:disc}. If $Q_S$ is a continuous query, then $\hat{r}=\hat{r}_N(Q_S(z_0))$ and $\hat{q}=\hat{q}_N(Q_S(z_0))$, as defined in Theorem \ref{lemma:cont}
\item  The $(1-\delta)$-confidence interval of $\AdvI_{l,p}(Q_S,z_0)$ follows by plugging the $(1-\delta/2)$ confidence intervals of $p$ and $q$ -- $[\hat{r}_{lower},\hat{r}_{upper}]$and $[\hat{q}_{lower},[\hat{q}_{upper}]$ into Equation \ref{eqn:AdvI:g:estimate}. In the case of discrete queries, $[\hat{r}_{\rm{lower}},\hat{r}_{\rm{upper}}] = [\hat{r}_{\rm{j,lower}},\hat{r}_{\rm{j,upper}}]$, where $Q_S(z_0) = j$, , as defined in Theorem \ref{lemma:disc}. In the case of continuous queries, $[\hat{r}_{\rm{lower}},\hat{r}_{\rm{upper}}] =  [\widehat{r}_{\rm lower}(Q_S(z_0)),\widehat{r}_{\rm upper}(Q_S(z_0))]$, as defined in Theorem \ref{lemma:cont}. $\hat{q}_{lower}$ and $\hat{q}_{upper}$ are defined in a similar manner. 
\end{enumerate}
\end{thm}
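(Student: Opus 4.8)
The plan is to reduce $\AdvI_{l,p}(Q_S,z_0)$ to an explicit function of the two conditional likelihoods $P:=\prob\big(Q_S(z)=Q_S(z_0)\mid m=1\big)$ and $Q:=\prob\big(Q_S(z)=Q_S(z_0)\mid m=-1\big)$ (densities, in the continuous case), identify that function with the right-hand side of Equation~\ref{eqn:AdvI:g:estimate} evaluated at $(P,Q)$ in place of $(\hat p,\hat q)$, and then transport consistency and interval coverage from the estimators of $P,Q$ supplied by Theorems~\ref{lemma:disc} and~\ref{lemma:cont}. To get the closed form, condition on $Q_S(z)=Q_S(z_0)$ in Experiment~\ref{exp:2}. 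Because $\cA^*$ is deterministic it then outputs the single value $r_0=\sign(\eta_0-t_\ell)$ with $\eta_0:=\prob(m=1\mid Q_S(z)=Q_S(z_0))=Pp\,/\,\big(Pp+Q(1-p)\big)$ by Bayes' rule, and Condition~2 ($\eta_0\neq t_\ell$) makes $r_0\in\{-1,1\}$ well defined. Hence the quantities $\mathrm{TP},\mathrm{FP},\mathrm{FN},\mathrm{TN}$ appearing in the generalized metric (Equation~\ref{eq:metric}), evaluated on $\cE(\cD,S,Q_S\mid Q_S(z_0))$, collapse to $(\eta_0,1-\eta_0,0,0)$ when $r_0=1$ and to $(0,0,\eta_0,1-\eta_0)$ when $r_0=-1$. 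Substituting into that metric and clearing the common factor $Pp+Q(1-p)$ from numerator and denominator, a short algebraic check identifies the numerator with $c_1Pp+c_2Q(1-p)+\big(c_3Pp+c_4Q(1-p)\big)\mathbb{I}$ and the denominator with $d_1Pp+d_2Q(1-p)+\big(d_3Pp+d_4Q(1-p)\big)\mathbb{I}$, where $\mathbb{I}=\mathbb{I}(r_0=1)=\mathbb{I}(\eta_0>t_\ell)=\mathbb{I}\big((1-t_\ell)pP>t_\ell(1-p)Q\big)$; Condition~3 is exactly the assertion that this denominator is nonzero, so $\AdvI_{l,p}(Q_S,z_0)=g(P,Q)$ where $g$ is the expression in Equation~\ref{eqn:AdvI:g:estimate} viewed as a function of its two likelihood arguments.

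For part~1, Theorem~\ref{lemma:disc} (discrete $\cQ$) or Theorem~\ref{lemma:cont} (continuous $\cQ$) provides estimators $\hat p,\hat q$ that converge in probability to $P,Q$ — by the law of large numbers for the Bernoulli averages $\hat p_j,\hat q_j$ in the discrete case, and by standard consistency of the kernel density estimators $\hat p_N,\hat q_N$ in the continuous case. The map $g$ is continuous in a neighbourhood of $(P,Q)$ except across the hyperplane $\{(1-t_\ell)pu=t_\ell(1-p)v\}$, and Condition~2 places $(P,Q)$ strictly off it, so the plug-in indicator $\mathbb{I}\big((1-t_\ell)p\hat p>t_\ell(1-p)\hat q\big)$ agrees with $\mathbb{I}\big((1-t_\ell)pP>t_\ell(1-p)Q\big)$ with probability tending to $1$. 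On that event $g$ is a fixed ratio of affine functions of $(\hat p,\hat q)$ whose denominator is nonzero at $(P,Q)$ by Condition~3, so the continuous mapping theorem together with Slutsky's lemma gives $g(\hat p,\hat q)\to\AdvI_{l,p}(Q_S,z_0)$ in probability; this is the claimed consistent estimator.

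For part~2, let $[\hat p_{\mathrm{lower}},\hat p_{\mathrm{upper}}]$ and $[\hat q_{\mathrm{lower}},\hat q_{\mathrm{upper}}]$ be the $(1-\delta/2)$ confidence intervals for $P$ and $Q$ from Theorems~\ref{lemma:disc}/\ref{lemma:cont}. A union bound shows that with probability at least $1-\delta$ the rectangle $R:=[\hat p_{\mathrm{lower}},\hat p_{\mathrm{upper}}]\times[\hat q_{\mathrm{lower}},\hat q_{\mathrm{upper}}]$ contains $(P,Q)$, hence $\AdvI_{l,p}(Q_S,z_0)=g(P,Q)\in g(R)$. It then remains to bound $g(R)$: with the indicator held fixed $g$ is linear-fractional, hence monotone in each coordinate along any segment avoiding its pole, so (taking the denominator nonvanishing on $R$, which holds asymptotically by Condition~3 and may be imposed for finite samples) the extrema of $g$ over $R$ occur among the four corners of $R$ together with the trace on $R$ of the decision hyperplane; evaluating Equation~\ref{eqn:AdvI:g:estimate} at the endpoint combinations $\{\hat p_{\mathrm{lower}},\hat p_{\mathrm{upper}}\}\times\{\hat q_{\mathrm{lower}},\hat q_{\mathrm{upper}}\}$ (and, where the hyperplane crosses $R$, the corresponding one-sided values) and taking the least and greatest resulting numbers yields an interval that covers $\AdvI_{l,p}(Q_S,z_0)$ with probability at least $1-\delta$.

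\textbf{Main obstacle.} Everything apart from the Bayes-optimal decision flip — the indicator in Equation~\ref{eqn:AdvI:g:estimate} — is bookkeeping. For consistency the flip is dispatched cleanly by the strict separation in Condition~2. For the confidence interval it is the genuinely delicate point: unlike the accuracy case of Theorems~\ref{lemma:disc}--\ref{lemma:cont}, where one substitutes $(\hat p_{\mathrm{lower}},\hat q_{\mathrm{upper}})$ and $(\hat p_{\mathrm{upper}},\hat q_{\mathrm{lower}})$ by global monotonicity, here $R$ may straddle the decision hyperplane, so a true min/max of $g$ over $R$ (including one-sided boundary values) is needed; a related nuisance is that Condition~3 only guarantees a nonzero denominator at the true $(P,Q)$, so nonvanishing of the denominator on all of $R$ must be argued or assumed. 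These are the places where the formal write-up will need care.
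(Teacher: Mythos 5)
Your proposal is correct and follows essentially the same route as the paper: both reduce the conditional metric to a function of the two likelihoods by noting that the Bayes-optimal decision is constant given $Q_S(z)=Q_S(z_0)$ (so TP, FP, FN, TN collapse according to the sign of $\eta_0-t_\ell$), clear the common factor $Pp+Q(1-p)$ to obtain Equation~\ref{eqn:AdvI:g:estimate}, and then invoke consistency of $\hat p,\hat q$ with Slutsky, the continuous mapping theorem, Condition~2 for the indicator, Condition~3 for the denominator, and a union bound for the interval. Your extra care on part~2 --- that when the confidence rectangle straddles the decision hyperplane one must genuinely minimize/maximize over it rather than plug in corner values --- is a point the paper itself concedes in the discussion following the theorem (``an optimization problem would need to be solved'') but glosses over in its written proof, so you are, if anything, slightly more rigorous there.
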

 
\begin{proof}
See the complete proof in Appendix \ref{sec:proof:lemma:opt}.
\end{proof}
A wide range of performance measures satisfy Condition 1. For example, AM measure \citep{menon2013statistical} has the optimal threshold $t_{\rm AM}=p$, and $TPR$  has the optimal threshold $t_{\rm TPR}=0$. Condition 2 can be waived if $c_3=c_4=d_3=d_4=0$. Metrics such as accuracy, recall and specificity meet the above condition. Condition 3 assumes a nonzero expected value of the denominator in Equation \ref{eqn:AdvI:g:estimate}.As for the confidence interval in 2, there could exist a closed-form expression if for example, $\AdvI(Q_S,z_0)$ is a monotonic function in terms of $\prob(Q_S(z_0))|m=1)/\prob(Q_S(z_0))|m=-1)$. Otherwise, an optimization problem would need to be solved, as outlined in 2. 

 

 \section{Estimation of the Optimal Membership Advantage}

 In the above section, we proposed to estimate individual privacy risk for both the accuracy-based and generalized metrics with theoretical justifications. In this section, we further develop methods to estimate the optimal membership advantage $\Adv_p(Q_S)$.

 \subsection{For the accuracy-based metric}
 We first propose a consistent estimator for the optimal membership advantage given a discrete query. 
\begin{thm}
\label{thm:maci_discrete}
Let $X_1,\cdots,X_{N_1}$ be i.i.d. random variables drawn from $S$ and  $Y_1,\cdots,Y_{N_2}$ be i.i.d. random variables drawn from $\cD\backslash S$ in Experiment \ref{exp:1}, where $N_1=p N$ and $N_2=(1-p)N$. Assume that $\cQ$ is a finite set. Define \begin{align*} &W_N =W(X_1,\cdots,X_{N_1},Y_1,\cdots,Y_{N_2})\\	&=\sum\limits_{j\in \cQ}|\frac{p}{N_1}\sum\limits_{i=1}^{N_1} \mathbb{I}(Q_S(X_i)=j)-\frac{1-p}{N_2}\sum\limits_{i=1}^{N_2} \mathbb{I}(Q_S(Y_i)=j)| \;.
 	 	\end{align*}Then (1) $W_N$ is a consistent estimator of the optimal membership advantage $\Adv_p(Q_S)$; (2) $\mathbb{P}(|W_N-\mathbb{E}W_N|\ge\sqrt{\frac{2}{N}\log(\frac{2}{\delta})})\le \delta$.

\end{thm}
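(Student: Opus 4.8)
The plan is to first rewrite the estimand $\Adv_p(Q_S)$ in closed form, then get consistency from the law of large numbers, and finally get the tail bound from McDiarmid's bounded-differences inequality. For the closed form, write $p_j=\prob(Q_S(z)=j\mid m=1)$ and $q_j=\prob(Q_S(z)=j\mid m=-1)$, and invoke Lemma~\ref{lemma:bayes}, which gives $\Adv_p(Q_S)=\E_z[|f_p(z)|]$. Since $f_p$ depends on $z$ only through $Q_S(z)$, and the marginal mass that $Q_S(z)$ puts on $j$ under $\cE(\cD,S,Q_S)$ equals $p\,p_j+(1-p)q_j$, conditioning on the value of $Q_S(z)$ collapses the expectation (the denominator of $f_p$ cancels this marginal mass) to
\begin{equation*}
\Adv_p(Q_S)=\sum_{j\in\cQ}\bigl(p\,p_j+(1-p)q_j\bigr)\,\frac{|p\,p_j-(1-p)q_j|}{p\,p_j+(1-p)q_j}=\sum_{j\in\cQ}\bigl|p\,p_j-(1-p)q_j\bigr|.
\end{equation*}
This is exactly $W_N$ with $p_j,q_j$ replaced by the empirical frequencies $\hat p_j,\hat q_j$, so the theorem is really a statement about a plug-in estimator.

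For part~(1), note that since $p\in(0,1)$ both $N_1=Np\to\infty$ and $N_2=N(1-p)\to\infty$ with $N$. For each fixed $j$ the indicators $\mathbb{I}(Q_S(X_i)=j)$ are i.i.d.\ Bernoulli$(p_j)$, so $\hat p_j\to p_j$ in probability, and likewise $\hat q_j\to q_j$; finiteness of $\cQ$ makes this joint over all coordinates. I would then apply the continuous mapping theorem to the continuous (piecewise-linear) map $(\hat p_j,\hat q_j)_{j\in\cQ}\mapsto\sum_j|p\hat p_j-(1-p)\hat q_j|$ to conclude $W_N\to\Adv_p(Q_S)$ in probability.

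For part~(2), I would view $W_N$ as a function of the $N$ independent inputs $X_1,\dots,X_{N_1},Y_1,\dots,Y_{N_2}$ and bound its bounded-differences coefficients. Changing one $X_i$ perturbs $\hat p_j$ for at most two values of $j$, each by $1/N_1$ in absolute value, so the triangle inequality bounds the resulting change in $W_N$ by $p\cdot(2/N_1)=2/N$; by symmetry, changing one $Y_i$ moves $W_N$ by at most $(1-p)\cdot(2/N_2)=2/N$. With all $N$ coefficients equal to $2/N$, McDiarmid's inequality gives, for every $t>0$,
\begin{equation*}
\prob\bigl(|W_N-\E W_N|\ge t\bigr)\le 2\exp\!\Bigl(-\tfrac{2t^2}{N(2/N)^2}\Bigr)=2\exp\!\Bigl(-\tfrac{Nt^2}{2}\Bigr),
\end{equation*}
and choosing $t=\sqrt{\tfrac2N\log\tfrac2\delta}$ makes the right-hand side equal to $\delta$.

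\emph{Expected difficulty.} None of the steps is deep; the point that needs care is the bounded-differences computation, where the uniform coefficient $2/N$ relies on the exact cancellation $2p/N_1=2(1-p)/N_2=2/N$ afforded by the prescribed sizes $N_1=Np$, $N_2=N(1-p)$ — with a general split the coefficients would be coordinate-dependent and the constant in the exponent would change. It is also worth flagging (though not needed for the statement) that $\E W_N\ge\Adv_p(Q_S)$ by Jensen's inequality, strict in general, so part~(2) controls fluctuations of $W_N$ about its own mean rather than about the estimand; bounding that bias is routine.
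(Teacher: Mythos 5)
Your proposal is correct and follows essentially the same route as the paper's proof: the same closed-form identity $\Adv_p(Q_S)=\sum_{j\in\cQ}|p\,p_j-(1-p)q_j|$ obtained from Lemma~\ref{lemma:bayes} by cancelling the marginal mass, consistency via the law of large numbers plus continuous mapping, and the tail bound via McDiarmid's inequality with the identical bounded-differences coefficient $2p/N_1=2(1-p)/N_2=2/N$. Your closing remark that part~(2) concentrates $W_N$ around $\E W_N$ rather than around $\Adv_p(Q_S)$ (with $\E W_N\ge\Adv_p(Q_S)$ by Jensen) is a fair observation that the paper does not make explicit, but it does not affect the correctness of either argument.
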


\begin{proof}
See the complete proof in Appendix \ref{sec:proof:thm:maci_disc}.
\end{proof}
We then propose a consistent estimator for the optimal membership advantage given a continuous query. 
\begin{thm}
\label{thm:maci_continuous}
Let $X_1,\cdots,X_{N_1}$ be i.i.d. random variables drawn from $S$, and  $Y_1,\cdots,Y_{N_2}$ be i.i.d. random variables drawn from $\cD\backslash S$ in Experiment \ref{exp:1}, where $N_1=Np$ and $N_2=N(1-p)$.  Assume that $\cQ=\mathbb{R}^d$ and $h\rightarrow 0$, as $N\rightarrow \infty$. Define \begin{eqnarray*} U_N &=&U(X_1,\cdots,X_{N_1},Y_1,\cdots,Y_{N_2})\\	&=&\int |\frac{p}{N_1h^d}\sum\limits_{i=1}^{N_1}K(\frac{x-Q_S(X_i)}{h})-\frac{1-p}{N_2h^d}\sum\limits_{i=1}^{N_2}K(\frac{x-Q_S(Y_i)}{h})|dx.
 	 	\end{eqnarray*}Then (1) $U_N$ is a consistent estimator of the optimal membership advantage $\Adv_p(Q_S)$; (2) $\mathbb{P}(|U_N-\mathbb{E}U_N|\ge\sqrt{\frac{2}{N}\log(\frac{2}{\delta})})\le \delta$.

\end{thm}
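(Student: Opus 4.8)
The plan is to recognise $U_N$ as a kernel plug‑in estimator of a (weighted) total‑variation integral, derive consistency from the $L^1$‑consistency of kernel density estimators, and obtain the deviation bound from McDiarmid's inequality, exactly as for Theorem~\ref{thm:maci_discrete}. Concretely, write $p(\cdot)$ and $q(\cdot)$ for the conditional densities of $Q_S(z)$ given $m=1$ and given $m=-1$ on $\cQ=\reals^d$. Under Experiment~\ref{exp:2} the law of $Q_S(z)$ has density $p\,p(x)+(1-p)\,q(x)$, so substituting the formula for $f_p$ from Lemma~\ref{lemma:bayes} into $\Adv_p(Q_S)=\E_z[|f_p(z)|]$ and cancelling the common nonnegative denominator yields
\[
\Adv_p(Q_S)=\int_{\reals^d}\bigl|\,p\,p(x)-(1-p)\,q(x)\,\bigr|\,dx .
\]
With $\widehat p_N$ and $\widehat q_N$ the kernel density estimators of $p$ and $q$ from Theorem~\ref{lemma:cont} (built from the $N_1=Np$ member samples and the $N_2=N(1-p)$ hold‑out samples of Experiment~\ref{exp:1}), $U_N=\int|\,p\,\widehat p_N(x)-(1-p)\,\widehat q_N(x)\,|\,dx$ is precisely the plug‑in version of this display; in particular $U_N\in[0,1]$.

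\emph{Part (2).} Apply McDiarmid's bounded differences inequality to $U_N$ as a function of the $N=N_1+N_2$ independent inputs, as in the proof of Theorem~\ref{thm:maci_discrete}. Replacing one member input $X_k$ by $X_k'$ leaves $\widehat q_N$ fixed and perturbs $p\,\widehat p_N$ by $\frac{p}{N_1h^d}\bigl[K(\tfrac{\cdot-Q_S(X_k')}{h})-K(\tfrac{\cdot-Q_S(X_k)}{h})\bigr]$; since $\bigl|\,\|u\|_{L^1}-\|v\|_{L^1}\,\bigr|\le\|u-v\|_{L^1}$ and $\int|K((x-a)/h)|\,dx=h^d$ for every $a$ (using $K\ge 0$, $\int K=1$), the induced change in $U_N$ is at most $\frac{p}{N_1h^d}\cdot 2h^d=\frac{2p}{N_1}=\frac{2}{N}$; symmetrically each hold‑out input has influence at most $\frac{2(1-p)}{N_2}=\frac{2}{N}$. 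Hence $\sum_i c_i^2=N(2/N)^2=4/N$, McDiarmid gives $\prob(|U_N-\E U_N|\ge t)\le 2e^{-Nt^2/2}$, and $t=\sqrt{\tfrac{2}{N}\log\tfrac{2}{\delta}}$ makes the right‑hand side equal $\delta$.

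\emph{Part (1).} Part (2) already shows $|U_N-\E U_N|\to 0$ in probability, and almost surely by Borel--Cantelli since the tail bound at any fixed level is summable over $N$; so it suffices to show $\E U_N\to\Adv_p(Q_S)$. By $\bigl|\,\|u\|_{L^1}-\|v\|_{L^1}\,\bigr|\le\|u-v\|_{L^1}$ and linearity,
\[
\bigl|\E U_N-\Adv_p(Q_S)\bigr|\le p\,\E\|\widehat p_N-p\|_{L^1}+(1-p)\,\E\|\widehat q_N-q\|_{L^1},
\]
and each expected $L^1$ error vanishes by the classical $L^1$‑consistency of kernel density estimators, which holds whenever $h\to 0$ and $Nh^d\to\infty$ and requires no smoothness of $p,q$ (Scheffé's lemma promotes to $L^1$ the pointwise consistency underlying Lemma~\ref{lemma:kde:ci}; see \citep{chen2017tutorial}). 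Since $N_1=Np\to\infty$ and $N_2=N(1-p)\to\infty$, both estimators fall in this regime, which completes the argument.

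\emph{Main obstacle.} The probabilistic content is light: part (2) is a one‑line bounded‑differences computation once one notices that the $h^{-d}$ prefactor is cancelled exactly by the $L^1$ mass $h^d$ of the rescaled kernel, so each sample's influence on $U_N$ is $\Theta(1/N)$ regardless of $h$. The care goes into part (1): one must make the bandwidth schedule precise (add $Nh^d\to\infty$ to the stated $h\to 0$) and invoke the $L^1$‑convergence theorem for KDEs rather than a pointwise or $L^2$ statement, and one should record that the paper's standing absolute‑continuity assumption is what makes $p(\cdot),q(\cdot)$ genuine Lebesgue densities on $\reals^d$ and hence legitimises the $L^1$ view for the samples of Experiment~\ref{exp:1}.
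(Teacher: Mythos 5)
Your proof is correct and follows essentially the same route as the paper's: both identify $\Adv_p(Q_S)$ with $\int|p\,p(x)-(1-p)q(x)|\,dx$, bound the estimator's error by the $L^1$ errors of the two kernel density estimators (controlled by Devroye's $L^1$-consistency theorem, the paper's Theorem~\ref{thm: kde_converge}), and obtain part (2) from McDiarmid's inequality with per-coordinate influence $2/N$. Your remark that the bandwidth schedule must also satisfy $Nh^d\to\infty$ (not just $h\to 0$) is a fair catch --- the paper's theorem statement omits this, though its cited auxiliary theorem requires it.
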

\begin{proof}
See the complete proof in Appendix \ref{proof:thm:maci_con}.
\end{proof}
The practical estimation of the optimal membership advantage for the accuracy-based metric is described in Algorithm \ref{alg:alpha} using Monte Carlo integration.
    \begin{algorithm}[t]
 \begin{algorithmic}[1]
 \REQUIRE{number of samples $N$, prior of membership $p$, training set $S$, query function $Q_S$, confidence level $\delta$}
 \ENSURE{the optimal membership advantage estimate $\widehat{\Adv}_p(Q_S)$ }
 \STATE{Perform Experiment \ref{exp:1} and draw one set of samples $\{z_i\}_{i=1}^N$ with membership $\{m_i\}_{i=1}^N$ respectively.} \\
 \IF {$Q_S$ is discrete} 
     \STATE{Use the tuples $\{(z_i,m_i)\}_{i=1}^N$ to calculate $W_N$} \\
     \STATE{$\widehat{\Adv}_p(Q_S) \leftarrow W_N$} 
 \ELSE
     \STATE{Use the tuples $\{(z_i,m_i)\}_{i=1}^N$ to calculate $U_N$} \\
 \STATE{$\widehat{\Adv}_p(Q_S) \leftarrow U_N$}
\ENDIF
 \caption{Practical estimation of the optimal membership advantage $\widehat{\Adv}_p(Q_S)$}
 \label{alg:alpha}
 \end{algorithmic}
\end{algorithm}

 \subsection{For the generalized metrics}


In this subsection, we propose  consistent estimators for the optimal membership advantage when there exists a closed form solution for the optimal adversary. 

Under some specific condition, Lemma \ref{lemma:bayes_generalized}
shows that the Bayes optimal classifier $\cA^*$ maximizing membership advantage under the generalized metric $l$ is given by
\begin{eqnarray}
\label{eqn:tlknownBayes}
	\cA^*(Q_S(z)) = \sign\left(\eta(Q_S(z))-t_\ell \right), \eta(Q_S(z))=\prob(m=1|Q_S(z))\;,
\end{eqnarray}
where  $t_\ell=(a_{00}-a_{10})/(a_{11}-a_{10}-a_{01}+a_{00})$. Hence, we propose Algorithm \ref{alg:delta} when the Bayes optimal classifier can be written as Equation \ref{eqn:tlknownBayes} and $t_\ell$ is known. As shown in Algorithm \ref{alg:delta}, we first split the $N$ samples into two partitions, and use the first partition to obtain the estimator $\hat{\eta}(Q_S(z))$ for $\eta(Q_S(z))$. Following this, we estimate the Bayes optimal classifier $\cA^*$ by  \[\hat{\cA}(Q_S(z))=\sign\left(\hat{\eta}(Q_S(z))-t_\ell\right).\]
\begin{algorithm}[h]
 \begin{algorithmic}[1]
 \REQUIRE{number of samples $N$, prior of membership $p$, training set $S$, query function $Q_S$, generalized metric $\ell$}
 \ENSURE{the empirical membership privacy $\Adv_{\ell, p}(Q_S)$ estimate  }
 \STATE{ Perform Experiment \ref{exp:1} and split the $N$ samples into two sets $S_1$ and $S_2$.} \\
\STATE{
Estimate $\eta(Q_S(z))$ by $\hat{\eta}(Q_S(z))$ using $S_1$.
}
 \STATE{Let $\hat{\cA}(Q_S(z))=\sign\left(\hat{\eta}(Q_S(z))-t_\ell\right)$, where $t_\ell=(a_{00}-a_{10})/(a_{11}-a_{10}-a_{01}+a_{00})$.
 }
 \STATE{
 Calculate $\widehat{\Adv}_{\ell, p}(Q_S,\hat{\cA})$ using  $S_2$.
}
 \caption{Practical estimation of the optimal membership advantage $\Adv_{\ell,p}(Q_S)$ under a  generalized metric $\ell$}
 \label{alg:delta}
 \end{algorithmic}
\end{algorithm}
Next, we use the second partition to obtain an empirical measure of ${\Adv}_{\ell, p}(Q_S,\hat{\cA})$. More specifically, we adopt the empirical measure defined by  \citet{koyejo2014consistent}. For any adversary $\cA$, assuming that we sample  $\{z_i,m_i\}_{i=1}^{n}$ by Experiment \ref{exp:1}, we first define \[
\mathrm{TP}_{n}(\cA)=\sum_{i=1}^{n}\frac{1}{n}\left(\frac{1}{2}\cA(Q_S(z_i)+\frac{1}{2}\right)\left(\frac{1}{2}m_i+\frac{1}{2}\right), ~ \gamma_{n}(\cA)=\frac{1}{n}\sum\limits_{i=1}^{n}\left(\frac{1}{2}\cA(Q_S(z_i)+\frac{1}{2}\right)\] as the empirical estimate of TP$=\prob(M=1,\cA(Q_S(Z))=1)$ and $\prob(\cA(Q_S(Z))=1)$. After this, we define the empirical measure of ${\Adv}_{\ell, p}(Q_S,{\cA})$ as follows: 
\begin{equation}
\label{eqn:bayesconst}
    \widehat{\Adv}^{n}_{\ell, p}(Q_S, {\cA})=\frac{e_0+e_1\mathrm{TP}_{n}(\cA)+e_2\gamma_{n}(\cA)}{h_0+h_1\mathrm{TP}_{n}(\cA)+h_2\gamma_{n}(\cA)} 
    \end{equation}
with the constants \[e_0=a_{01}p+a_{00}-a_{00}p+a_0,e_1=a_{11}-a_{10}-a_{01}+a_{00},e_2=a
_{10}-a_{00},\]
\[h_0=b_{01}p+b_{00}-b_{00}p+b_0,h_1=b_{11}-b_{10}-b_{01}+b_{00},h_2=b
_{10}-b_{00}.\]
In Algorithm \ref{alg:delta}, the empirical measure $\widehat{\Adv}_{\ell, p}(Q_S, \hat{\cA})=\widehat{\Adv}^{|S_2|}_{\ell, p}(Q_S, \hat{\cA})$. 

Next, we introduce the following lemma giving the form  of the Bayes optimal classifier when the query $Q_S$ is continuous. 

\begin{lemma} [\citet{koyejo2014consistent}]
\label{lemma:koyejo}
Assume that the marginal distribution of $Q_S(Z)$ in Experiment \ref{exp:2} is absolutely continuous with respect to the dominating measure on $Q_S(\cZ)$. Given the constants $e_0,e_1,e_2,h_0,h_1,h_2$ defined in Equation \ref{eqn:bayesconst}, define
\[t_\ell^*=\frac{h_2\Adv_{\ell,p}(Q_S)-e_2}{e_1-h_1\Adv_{\ell,p}(Q_S)}.\]
\begin{enumerate}
    \item When $e_1>h_1\Adv_{\ell,p}(Q_S)$, the Bayes optimal classifier $\cA^*$ maximizing membership advantage under the
generalized metric $\ell$ is given by $\sign(\eta(Q_S(z))-t_\ell^*)$;
\item When $e_1<h_1\Adv_{\ell,p}(Q_S)$, the Bayes optimal classifier $\cA^*$ maximizing membership advantage under the
generalized metric $\ell$ is given by $\sign(t_\ell^*-\eta(Q_S(z)))$.
\end{enumerate}
\end{lemma}
By Lemma \ref{lemma:koyejo}, the specific form of the Bayes optimal classifier relies on the unknown optimal membership advantage $\Adv_{\ell,p}(Q_S)$. \citet{koyejo2014consistent} suggested to estimate loose upper and lower bounds of $\Adv_{\ell,p}(Q_S)$ to determine the classifier. In the rest of this subsection, we assume $e_1>h_1\Adv_{\ell,p}(Q_S)$ so $\cA^*(Q_S(z))=\sign(\eta(Q_S(z))-t_\ell^*)$ . The case where $e_1<h_1\Adv_{\ell,p}(Q_S)$ can be solved similarly. 

Based on Lemma \ref{lemma:koyejo}, we propose Algorithm \ref{algo:koyeo2} when the query  $Q_S$ is continuous. In Algorithm \ref{algo:koyeo2}, we first split the $N$ samples into three partitions, and use the first partition to obtain the estimator $\hat{\eta}(Q_S(z))$ for $\eta(Q_S(z))$. Next, we use the second partition to estimate $t_\ell$. Combing these two steps, we obtain the empirical Bayes optimal classifier $\hat{\cA}(Q_S(z)) = \sign(\widehat{\eta}(Q_S(z))-\hat{t}_\ell)$. We use the third partition to calculate the empirical measure of ${\Adv}_{\ell, p}(Q_S, \hat{\cA})$ by Equation \ref{eqn:bayesconst}. Especially in Algorithm \ref{algo:koyeo2},   $\widehat{\Adv}_{\ell, p}(Q_S, \hat{\cA})=\widehat{\Adv}^{|S_3|}_{\ell, p}(Q_S, \hat{\cA})$. 

\begin{algorithm}[h]
 \begin{algorithmic}[1]
 \REQUIRE{number of samples $N$, prior of membership $p$, training set $S$, query function $Q_S$, generalized metric $\ell$}
 \ENSURE{the empirical membership privacy $\Adv_{\ell, p}(Q_S)$ estimate  }
 \STATE{ Perform Experiment~\ref{exp:1} and draw three sets of samples $S_1$, $S_2$ and $S_3$.} \\
\STATE{
Estimate $\eta(Q_S(z))$ by $\hat{\eta}(Q_S(z))$ using $S_1$.
}

 \STATE{
 Compute $\hat{t}_\ell=\argmax_{x\in (0,1)}  \widehat{\Adv}_{\ell, p}(Q_S,\sign(\widehat{\eta}(Q_S(\cdot))-x))$ on $S_2$.

 }
 \STATE{Let $\hat{\cA}(Q_S(z)) = \sign(\widehat{\eta}(Q_S(z))-\hat{t}_\ell)$. 
 }
 \STATE{
 Calculate $\widehat{\Adv}_{\ell, p}(Q_S,\hat{\cA})$ using $S_3$.
 }
 \caption{Practical estimation of the optimal membership advantage $\Adv_{\ell,p}(Q_S)$ under a  generalized metric $\ell$ when $t_\ell$ is unknown}
 \label{algo:koyeo2}
 \end{algorithmic}
\end{algorithm}

We now introduce the following nice properties of the two algorithms. First,  Theorem \ref{thm:gen_metric} shows that $\Adv_{\ell, p}(Q_S, \hat{\cA})$ in Algorithm \ref{alg:delta} is a consistent estimate of the optimal membership advantage $\Adv_{\ell, p}(Q_S)$ if $E|\hat{\eta}-\eta|^2\to 0$. Using a suitable strongly proper loss function, we can obtain an estimator $\hat{\eta}$
 satisfying $E|\hat{\eta}-\eta|^2\to 0$ 
 by the proof of Theorem 5 \citep{menon2013statistical}.
\begin{thm}
\label{thm:gen_metric}
Assume that $b_{11}=b_{01}$, $b_{10}=b_{00}$, $a_{11}>a_{01}$ and $a_{00}>a_{10}$. Let $\hat{\cA}$ outputted by Algorithm \ref{exp:1}. If $\mathbb{E}_{Q_S(z)}(|\hat{\eta}(Q_S(z))-\eta(Q_S(z))|^\sigma)\xrightarrow{p} 0$ for some $\sigma\ge 1$,
\[\Adv_{\ell, p}(Q_S, \hat{\cA}) \xrightarrow{p} \Adv_{\ell, p}(Q_S).\]
\end{thm}
\begin{proof}
 See the complete proof in Appendix \ref{sec:proof_gen_metric}
\end{proof}
Second, we show the consistency of the empirical measure of the membership advantage of any adversary by the proof of Lemma 8 \citep{koyejo2014consistent}. 

\begin{thm}[\citet{koyejo2014consistent}]
  For each adversary  $\cA$,  $\widehat{\Adv}^n_{\ell, p}(Q_S, \cA)\xrightarrow{p} \Adv_{\ell, p}(Q_S,\cA).$
\end{thm}
Finally, we show the following nice property of $\hat{t}_{\ell}$ -  the estimate of $t_\ell$ in  Algorithm \ref{algo:koyeo2}.

\begin{thm}[\citet{koyejo2014consistent}]
Assume that the marginal distribution of $Q_S(Z)$ in Experiment \ref{exp:2} is absolutely continuous with respect to the dominating measure on $Q_S(\cZ)$. Let $\hat{t}_{\ell}$ be outputted by Algorithm \ref{algo:koyeo2}. If $\hat{\eta} \xrightarrow{p} \eta$,
	\[\Adv_{\ell, p}(Q_S, \sign(\eta(Q_S(z))-\hat{t}_\ell)) \xrightarrow{p} \Adv_{\ell, p}(Q_S).\]
\end{thm}

\section{Experiments}
\label{sec:exp}
In this section, we demonstrate how to use MACE by performing practical membership privacy  estimation on trained generative models. 

\subsection{Setup}
\label{setup}
The different GAN architectures used were WGAN-GP \citep{gulrajani2017improved} (on CIFAR-10 \& MNIST), JS-GAN \citep{goodfellow2014generative} (on skin-cancer MNIST) and privGAN \citep{mukherjee2019protecting} (on MNIST). A JS-GAN is the original GAN formulation which uses a Jensen-Shannon divergence based loss. A WGAN-GP is an improved GAN formulation with a Wasserstein distance based loss with a gradient penalty. A privGAN is a GAN formulation that utilizes multiple generator-discriminator pairs and has been empirically shown to provide membership privacy. 

Our experiments would be based on the following three real-world datasets: MNIST, CIFAR-10 and skin cancer MNIST. MNIST contains gray scale images of handwritten digits  with 70000 digits from 0 to 9. CIFAR-10 comprises of 10 classes of 32 x32 RGB colored images with 60000 images in the whole dataset. Both of them are commonly used in the GAN literature. Additionally, to demonstrate a real world use case in healthcare, we use the skin cancer MNIST dataset \citep{tschandl2018ham10000} which comprises of 10,000 $64 \times 64$ RGB images of skin lesions (both cancerous and benign). 

Following the common practice of membership inference attacks on generative models \citep{hayes2019logan, mukherjee2019protecting, chen2019gan}, we choose a random $10\%$ subset of the entire dataset as a training set to show overfitting. In sub-section \ref{indiv-results},  $10\%$ of the training images were corrupted by placing a white box at the center of images (with no changes to the non-training set images). For all the  experiments, we set the confidence level $\delta=0.05$. To create discrete queries, we bin the continuous interval into $100^d$ bins (where $d$ is the dimension of the query).

\begin{figure}[h!]
    \centering
    \includegraphics[height=4cm]{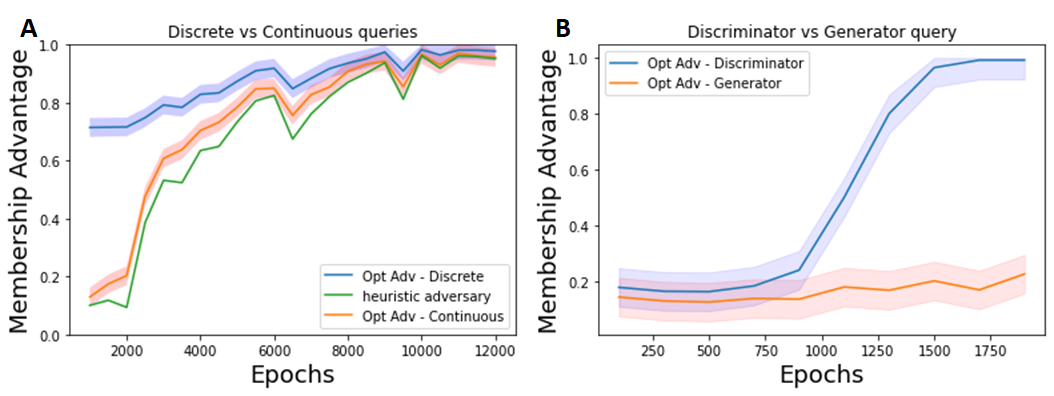}
    \includegraphics[height=4cm]{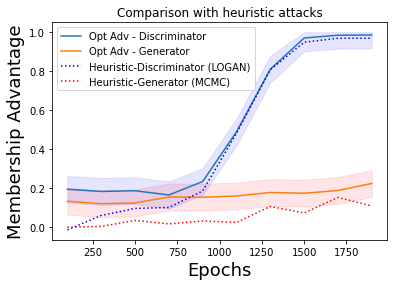}
    \put(-150,105){\textbf{c}}
    \caption{a) Comparison of discrete vs continuous queries against the WGAN-GP on the CIFAR-10 dataset. b) Comparison of queries against the generator and the  discriminator on the skin-cancer MNIST dataset (for JS-GAN). In all cases $\delta$ is set to $0.05$. c)  Comparison of the optimal membership advantage and heuristic attacks' membership advantage for both queries against the generator and the  discriminator on the skin-cancer MNIST dataset (for JS-GAN).}
    \label{fig:2}
\end{figure}

\subsection{Estimation of the optimal membership advantage}
\subsubsection{For the accuracy-based metric}
We first demonstrate the utility of our estimators of the optimal membership advantage under the accuracy-based metric with regard to different query types. In Figure \ref{fig:2}a we show the applicability of our discrete and continuous estimators using the discriminator score as the query. Additionally, we compare the estimated performance of the optimal adversaries against a heuristic adversary that uses the same query \citep{hayes2019logan}. We find that for both discrete and continuous queries, the estimated membership advantage is higher for the optimal adversary compared to the heuristic adversary. We note that the continuous query is seen to yield somewhat poorer performance than the discrete query, most likely due to sub-optimal selection of the KDE bandwidth. Optimal choice of the binning/KDE hyperparameters are beyond the scope of this paper but we direct readers to existing papers on this topic \citep{chen2015optimal,knuth2019optimal}. In Figure \ref{fig:2}b we show the utility of our estimators on queries against accessible models and accessible datasets using a query of each type. We use the discriminator score as an example of queries on an accessible model and the query described in Equation \ref{qs:syn} as an example of queries on an accessible synthetic dataset. Additionally, we compare our optimal membership advantage with the membership advantage of SOTA heuristic adversaries that use the similar queries as in \citep{hayes2019logan,hilprecht2019monte}. As widely reported \citep{hayes2019logan,mukherjee2019protecting}, we find among our experiments that the optimal membership advantage is a lot smaller when adversaries gain access to the datasets compared to when they get direct access to the model. Furthermore, our optimal membership advantage estimates are higher than the SOTA heuristic adversaries in both settings. This validates Theorem \ref{thm:maci_discrete} and \ref{thm:maci_continuous}, and demonstrates that our estimators are good estimators for the optimal membership advantage that would bound all the membership advantages including those due to the SOTA heuristic adversaries.  

\begin{figure}[h!]
    \centering
    \includegraphics[scale=0.6]{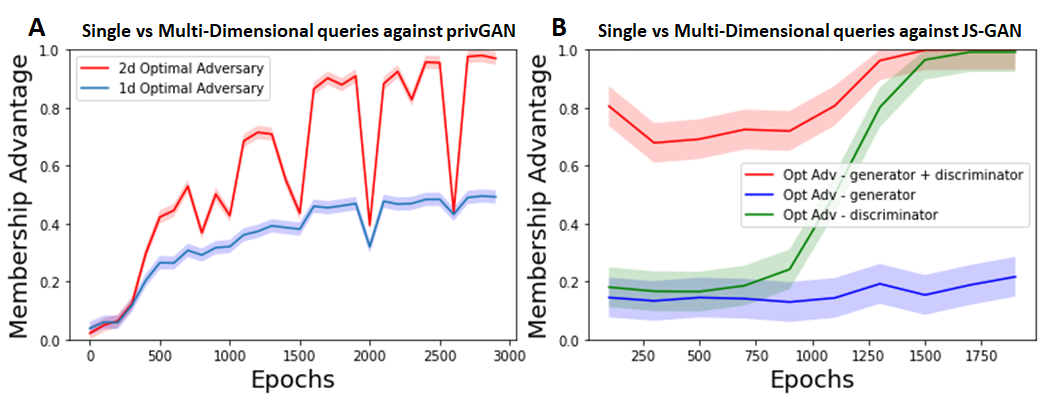}
    \caption{Comparison of single dimensional vs multi-dimensional queries against (a) privGAN on the MNIST dataset and (b) WGAN-GP on the skin-cancer MNIST dataset.}
    \label{fig:3}
\end{figure}

Next, we demonstrate the applicability of our estimators to multi-dimensional queries. In Figure \ref{fig:3}a we compare the estimated membership advantage for a single and a multi-dimensional queries against privGAN \citep{mukherjee2019protecting}. We see that the estimated membership advantage with the multi-dimensional query is much higher than that of the 1-d query used in the privGAN paper. This indicates that while the privGAN is less likely to suffer from overfitting than the JS-GAN, releasing multiple discriminators could potentially increase it's privacy risk. In Figure \ref{fig:3}b, we  compare the estimated membership advantage for two single and one multi-dimensional queries  against the JS-GAN. The multi-dimensional query is indeed  a hybrid query that is the combination of the two 1-d queries (Equation \ref{qs:comb}). Intuitively, a hybrid query should impose a higher privacy risk than an individual query. Our experiment has validated this assumption and shown that the hybrid query has a higher estimated membership advantage than each individual 1-d query. 
\begin{figure}[h!]
    \centering
    \includegraphics[scale=0.9]{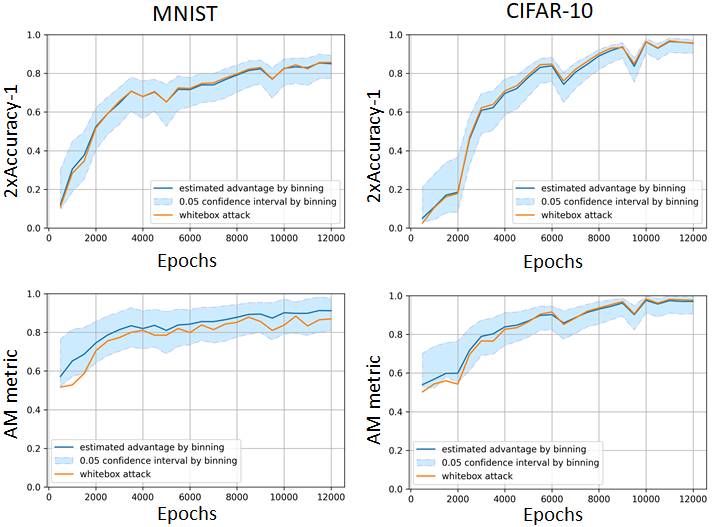}
    \caption{The membership advantages of the optimal adversary vs. the SOTA adversary against WGAN-GP under different metrics on the MNIST and CIFAR-10 datasets. For the sake of consistency, the membership advantage estimation for both metrics is done using the method used for the generalized metric.}
    \label{fig:my_label}
\end{figure}

\subsubsection{Generalized metric}
To demonstrate how to apply MACE under the generalized metrics, we qualitatively compare the membership advantages under AM and the accuracy-based metric. We set $p=0.1$ here to construct an imbalanced dataset. In Figure \ref{fig:my_label}, 
under both the accuracy-based metric and the AM metric, we compare the optimal membership advantage (the membership advantage of the optimal adversary) with the membership advantage of the heuristic adversary defined in \citep{hayes2019logan} using the discriminator score as the query. We find the heuristic adversary has comparable performance to the optimal adversary under both metrics. It is  important to note here that while the optimal adversary is asymptotically optimal, on any set of samples there may be a stronger adversary possible. 
 \begin{figure}[h]
    \centering
    \includegraphics[scale=0.6]{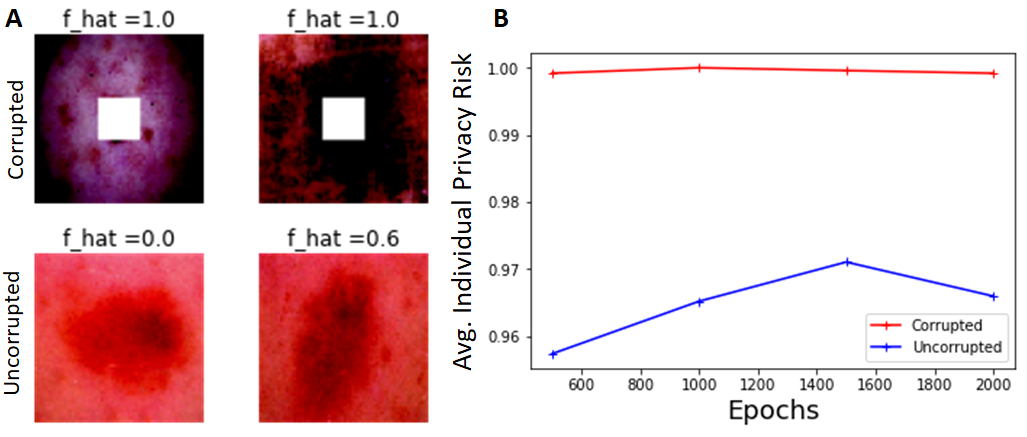}
    \caption{Application of individual privacy risk estimation. a) Example corrupted and uncorrupted images from skin-cancer MNIST, along with their estimated individual privacy risk. b) Comparison of average individual privacy risk between corrupted and uncorrupted images.}
    \label{fig:indiv}
\end{figure}

\subsection{Estimation of individual privacy risks}
\label{indiv-results}
Having demonstrated the utility of our estimators for estimating the optimal membership advantage, here we demonstrate the utility of our estimators for the  individual privacy risk. It is well known that samples from minority sub-groups can often be more vulnerable to membership inference attacks \citep{yaghini2019disparate}. Thus it is necessary to reveal the membership privacy at the individual level to reflect the true risks faced by the minority group. For the purposes of a simple demonstration, we constructed a toy dataset where $10\%$ of the images were corrupted (described in section \ref{setup}). After that, we trained a JS-GAN on this dataset and estimated the individual privacy risk of samples against the discriminator score query using the individual privacy risk estimator described in section \ref{estim_disc_indiv}. In Figure \ref{fig:indiv}, we show (both qualitatively and quantitatively) that 
the corrupted images on average have a higher individual privacy risk than the uncorrupted images. This shows that outlier samples or certain sub-groups can be more vulnerable than the rest of the population, and the optimal membership advantage alone may not capture this information. It is worth noting here that there are particularly severe ramifications of higher privacy leakage risks of minorities in healthcare settings, specially if such information points to the disease status. Data and model owners in such cases can consider retraining their generative model by omitting such sub-groups of samples.

\section{Conclusion and Remarks}
\label{sec:discussion}

We developed the first formal framework that provides a certificate for the membership privacy risk of a  trained generative model posed by adversaries having query access to the model at both the individual and population level. While our theory works regardless of the query dimension, we do not have a practical way to generate a meaningful certificate for the high-dimensional case. This would be a focus of our future work. Our framework works for a large family of metrics, allowing users flexible ways to measure the membership risk. 
Through experiments on multiple datasets, queries, model types and metrics, we show the practical applicability of our framework in measuring the optimal membership advantage as well as the individual privacy risk. Finally, to wrap up the paper, we re-visit our fictional example from the introduction to explain how MACE can help such data/model owners: 
\begin{itemize}
    \item By comparing the optimal membership advantage against a given trained model of queries that access the model via different practical queries, the model owners can determine the relative risk of releasing: i) the complete trained model, ii) parts of the trained model, iii) the synthetic dataset.
    \item For model owners who are only interested in releasing synthetic data, MACE can help identify how much data can be released for a desired level of membership advantage (see Appendix for example).  
    \item For datasets containing sensitive groups of samples, MACE allows to estimate the group-level membership privacy risk through the estimation of the individual privacy risk and identify the most vulnerable minorities.
\end{itemize}

While we focus on generative models in this paper, our framework can be applied to discriminative models as well. An example application of our framework to a discrminative model is shown in the Appendix. While we focus on the theoretical aspects of membership privacy estimation in this paper, future work could look at using MACE to design new queries which can lead to stronger MIAs and a better understanding of membership privacy risks of different generative models. Another important direction is to derive a consistent estimator of the optimal membership advantage  for the generalized metric by adapting \citet{koyejo2014consistent}.
It would also be interesting to extend our work to high-dimensional queries such as certain layers of the generator and discriminator. 

\bibliography{ref}

\begin{thebibliography}{40}
\providecommand{\natexlab}[1]{#1}
\providecommand{\url}[1]{\texttt{#1}}
\expandafter\ifx\csname urlstyle\endcsname\relax
  \providecommand{\doi}[1]{doi: #1}\else
  \providecommand{\doi}{doi: \begingroup \urlstyle{rm}\Url}\fi

\bibitem[Arjovsky et~al.(2017)Arjovsky, Chintala, and
  Bottou]{arjovsky2017wasserstein}
Martin Arjovsky, Soumith Chintala, and L{\'e}on Bottou.
\newblock Wasserstein gan.
\newblock \emph{arXiv preprint arXiv:1701.07875}, 2017.

\bibitem[Bandaragoda et~al.(2014)Bandaragoda, Ting, Albrecht, Liu, and
  Wells]{bandaragoda2014efficient}
Tharindu~R Bandaragoda, Kai~Ming Ting, David Albrecht, Fei~Tony Liu, and
  Jonathan~R Wells.
\newblock Efficient anomaly detection by isolation using nearest neighbour
  ensemble.
\newblock In \emph{2014 IEEE International Conference on Data Mining Workshop},
  pp.\  698--705. IEEE, 2014.

\bibitem[Berthelot et~al.(2017)Berthelot, Schumm, and Metz]{berthelot2017began}
David Berthelot, Thomas Schumm, and Luke Metz.
\newblock Began: Boundary equilibrium generative adversarial networks.
\newblock \emph{arXiv preprint arXiv:1703.10717}, 2017.

\bibitem[Chen et~al.(2019)Chen, Yu, Zhang, and Fritz]{chen2019gan}
Dingfan Chen, Ning Yu, Yang Zhang, and Mario Fritz.
\newblock Gan-leaks: A taxonomy of membership inference attacks against gans.
\newblock \emph{arXiv preprint arXiv:1909.03935}, 2019.

\bibitem[Chen et~al.(2021)Chen, Wang, Gao, and Shi]{chen2021gan}
Junjie Chen, Wendy~Hui Wang, Hongchang Gao, and Xinghua Shi.
\newblock Par-gan: Improving the generalization of generative adversarial
  networks against membership inference attacks.
\newblock In \emph{Proceedings of the 27th ACM SIGKDD Conference on Knowledge
  Discovery \& Data Mining}, pp.\  127--137, 2021.

\bibitem[Chen(2015)]{chen2015optimal}
Su~Chen.
\newblock Optimal bandwidth selection for kernel density functionals
  estimation.
\newblock \emph{Journal of Probability and Statistics}, 2015, 2015.

\bibitem[Chen(2017)]{chen2017tutorial}
Yen-Chi Chen.
\newblock A tutorial on kernel density estimation and recent advances.
\newblock \emph{Biostatistics \& Epidemiology}, 1\penalty0 (1):\penalty0
  161--187, 2017.

\bibitem[Clopper \& Pearson(1934)Clopper and Pearson]{clopper1934use}
Charles~J Clopper and Egon~S Pearson.
\newblock The use of confidence or fiducial limits illustrated in the case of
  the binomial.
\newblock \emph{Biometrika}, 26\penalty0 (4):\penalty0 404--413, 1934.

\bibitem[Devroye \& Gyorfi(1985)Devroye and Gyorfi]{devroye1985nonparametric}
L.~Devroye and L.~Gyorfi.
\newblock \emph{Nonparametric Density Estimation: The L1 View}.
\newblock Wiley Interscience Series in Discrete Mathematics. Wiley, 1985.
\newblock ISBN 9780471816461.
\newblock URL \url{https://books.google.com/books?id=ZVALbrjGpCoC}.

\bibitem[Devroye et~al.(2013)Devroye, Gy{\"o}rfi, and
  Lugosi]{devroye2013probabilistic}
Luc Devroye, L{\'a}szl{\'o} Gy{\"o}rfi, and G{\'a}bor Lugosi.
\newblock \emph{A probabilistic theory of pattern recognition}, volume~31.
\newblock Springer Science \& Business Media, 2013.

\bibitem[Elkan(2001)]{elkan2001foundations}
Charles Elkan.
\newblock The foundations of cost-sensitive learning.
\newblock In \emph{International joint conference on artificial intelligence},
  volume~17, pp.\  973--978. Lawrence Erlbaum Associates Ltd, 2001.

\bibitem[Feldman(2020)]{feldman2020does}
Vitaly Feldman.
\newblock Does learning require memorization? a short tale about a long tail.
\newblock In \emph{Proceedings of the 52nd Annual ACM SIGACT Symposium on
  Theory of Computing}, pp.\  954--959, 2020.

\bibitem[Georges-Filteau \& Cirillo(2020)Georges-Filteau and
  Cirillo]{georges2020synthetic}
Jeremy Georges-Filteau and Elisa Cirillo.
\newblock Synthetic observational health data with gans: from slow adoption to
  a boom in medical research and ultimately digital twins?
\newblock \emph{arXiv preprint arXiv:2005.13510}, 2020.

\bibitem[Goodfellow et~al.(2014)Goodfellow, Pouget-Abadie, Mirza, Xu,
  Warde-Farley, Ozair, Courville, and Bengio]{goodfellow2014generative}
Ian Goodfellow, Jean Pouget-Abadie, Mehdi Mirza, Bing Xu, David Warde-Farley,
  Sherjil Ozair, Aaron Courville, and Yoshua Bengio.
\newblock Generative adversarial nets.
\newblock In \emph{Advances in neural information processing systems}, pp.\
  2672--2680, 2014.

\bibitem[Gulrajani et~al.(2017)Gulrajani, Ahmed, Arjovsky, Dumoulin, and
  Courville]{gulrajani2017improved}
Ishaan Gulrajani, Faruk Ahmed, Martin Arjovsky, Vincent Dumoulin, and Aaron~C
  Courville.
\newblock Improved training of wasserstein gans.
\newblock In \emph{Advances in neural information processing systems}, pp.\
  5767--5777, 2017.

\bibitem[Hayes et~al.(2019)Hayes, Melis, Danezis, and
  De~Cristofaro]{hayes2019logan}
Jamie Hayes, Luca Melis, George Danezis, and Emiliano De~Cristofaro.
\newblock Logan: Membership inference attacks against generative models.
\newblock \emph{Proceedings on Privacy Enhancing Technologies}, 2019\penalty0
  (1):\penalty0 133--152, 2019.

\bibitem[Hilprecht et~al.(2019)Hilprecht, H{\"a}rterich, and
  Bernau]{hilprecht2019monte}
Benjamin Hilprecht, Martin H{\"a}rterich, and Daniel Bernau.
\newblock Monte carlo and reconstruction membership inference attacks against
  generative models.
\newblock \emph{Proceedings on Privacy Enhancing Technologies}, 2019\penalty0
  (4):\penalty0 232--249, 2019.

\bibitem[Jayaraman \& Evans(2019)Jayaraman and Evans]{jayaraman2019evaluating}
Bargav Jayaraman and David Evans.
\newblock Evaluating differentially private machine learning in practice.
\newblock In \emph{28th $\{$USENIX$\}$ Security Symposium ($\{$USENIX$\}$
  Security 19)}, pp.\  1895--1912, 2019.

\bibitem[Jayaraman et~al.(2020)Jayaraman, Wang, Evans, and
  Gu]{jayaraman2020revisiting}
Bargav Jayaraman, Lingxiao Wang, David Evans, and Quanquan Gu.
\newblock Revisiting membership inference under realistic assumptions.
\newblock \emph{arXiv preprint arXiv:2005.10881}, 2020.

\bibitem[Jordon et~al.(2018)Jordon, Yoon, and van~der Schaar]{jordon2018pate}
James Jordon, Jinsung Yoon, and Mihaela van~der Schaar.
\newblock Pate-gan: Generating synthetic data with differential privacy
  guarantees.
\newblock In \emph{International Conference on Learning Representations}, 2018.

\bibitem[Knuth(2019)]{knuth2019optimal}
Kevin~H Knuth.
\newblock Optimal data-based binning for histograms and histogram-based
  probability density models.
\newblock \emph{Digital Signal Processing}, 95:\penalty0 102581, 2019.

\bibitem[Koyejo et~al.(2014)Koyejo, Natarajan, Ravikumar, and
  Dhillon]{koyejo2014consistent}
Oluwasanmi~O Koyejo, Nagarajan Natarajan, Pradeep~K Ravikumar, and Inderjit~S
  Dhillon.
\newblock Consistent binary classification with generalized performance
  metrics.
\newblock In \emph{Advances in Neural Information Processing Systems}, pp.\
  2744--2752, 2014.

\bibitem[Li et~al.(2013)Li, Qardaji, Su, Wu, and Yang]{li2013membership}
Ninghui Li, Wahbeh Qardaji, Dong Su, Yi~Wu, and Weining Yang.
\newblock Membership privacy: a unifying framework for privacy definitions.
\newblock In \emph{Proceedings of the 2013 ACM SIGSAC conference on Computer \&
  communications security}, pp.\  889--900, 2013.

\bibitem[Long et~al.(2017)Long, Bindschaedler, and Gunter]{long2017towards}
Yunhui Long, Vincent Bindschaedler, and Carl~A Gunter.
\newblock Towards measuring membership privacy.
\newblock \emph{arXiv preprint arXiv:1712.09136}, 2017.

\bibitem[Menon et~al.(2013)Menon, Narasimhan, Agarwal, and
  Chawla]{menon2013statistical}
Aditya Menon, Harikrishna Narasimhan, Shivani Agarwal, and Sanjay Chawla.
\newblock On the statistical consistency of algorithms for binary
  classification under class imbalance.
\newblock In \emph{International Conference on Machine Learning}, pp.\
  603--611. PMLR, 2013.

\bibitem[Mirza \& Osindero(2014)Mirza and Osindero]{mirza2014conditional}
Mehdi Mirza and Simon Osindero.
\newblock Conditional generative adversarial nets.
\newblock \emph{arXiv preprint arXiv:1411.1784}, 2014.

\bibitem[Mukherjee et~al.(2019)Mukherjee, Xu, Trivedi, and
  Ferres]{mukherjee2019protecting}
Sumit Mukherjee, Yixi Xu, Anusua Trivedi, and Juan~Lavista Ferres.
\newblock Protecting gans against privacy attacks by preventing overfitting.
\newblock \emph{arXiv preprint arXiv:2001.00071}, 2019.

\bibitem[Nasr et~al.(2021)Nasr, Song, Thakurta, Papernot, and
  Carlini]{nasr2021adversary}
Milad Nasr, Shuang Song, Abhradeep Thakurta, Nicolas Papernot, and Nicholas
  Carlini.
\newblock Adversary instantiation: Lower bounds for differentially private
  machine learning.
\newblock \emph{arXiv preprint arXiv:2101.04535}, 2021.

\bibitem[Rajotte et~al.(2021)Rajotte, Mukherjee, Robinson, Ortiz, West, Ferres,
  and Ng]{rajotte2021felicia}
Jean-Francois Rajotte, Sumit Mukherjee, Caleb Robinson, Anthony Ortiz,
  Christopher West, Juan M.~Lavista Ferres, and Raymond~T. Ng.
\newblock Reducing bias and increasing utility by federated generative modeling
  of medical images using a centralized adversary.
\newblock GoodIT '21, pp.\  79–84, New York, NY, USA, 2021. Association for
  Computing Machinery.
\newblock ISBN 9781450384780.

\bibitem[Rezaei \& Liu(2020)Rezaei and Liu]{rezaei2020towards}
Shahbaz Rezaei and Xin Liu.
\newblock Towards the infeasibility of membership inference on deep models.
\newblock \emph{arXiv preprint arXiv:2005.13702}, 2020.

\bibitem[Sablayrolles et~al.(2019)Sablayrolles, Douze, Schmid, Ollivier, and
  J{\'e}gou]{sablayrolles2019white}
Alexandre Sablayrolles, Matthijs Douze, Cordelia Schmid, Yann Ollivier, and
  Herv{\'e} J{\'e}gou.
\newblock White-box vs black-box: Bayes optimal strategies for membership
  inference.
\newblock In \emph{International Conference on Machine Learning}, pp.\
  5558--5567, 2019.

\bibitem[Shokri et~al.(2017)Shokri, Stronati, Song, and
  Shmatikov]{shokri2017membership}
Reza Shokri, Marco Stronati, Congzheng Song, and Vitaly Shmatikov.
\newblock Membership inference attacks against machine learning models.
\newblock In \emph{2017 IEEE Symposium on Security and Privacy (SP)}, pp.\
  3--18. IEEE, 2017.

\bibitem[Tom et~al.(2020)Tom, Keane, Blazes, Pasquale, Chiang, Lee, Lee, and
  Force]{tom2020protecting}
Elysse Tom, Pearse~A Keane, Marian Blazes, Louis~R Pasquale, Michael~F Chiang,
  Aaron~Y Lee, Cecilia~S Lee, and AAO Artificial Intelligence~Task Force.
\newblock Protecting data privacy in the age of ai-enabled ophthalmology.
\newblock \emph{Translational Vision Science \& Technology}, 9\penalty0
  (2):\penalty0 36--36, 2020.

\bibitem[Torkzadehmahani et~al.(2019)Torkzadehmahani, Kairouz, and
  Paten]{torkzadehmahani2019dp}
Reihaneh Torkzadehmahani, Peter Kairouz, and Benedict Paten.
\newblock Dp-cgan: Differentially private synthetic data and label generation.
\newblock In \emph{Proceedings of the IEEE Conference on Computer Vision and
  Pattern Recognition Workshops}, pp.\  0--0, 2019.

\bibitem[Truex et~al.(2018)Truex, Liu, Gursoy, Yu, and Wei]{truex2018towards}
Stacey Truex, Ling Liu, Mehmet~Emre Gursoy, Lei Yu, and Wenqi Wei.
\newblock Towards demystifying membership inference attacks.
\newblock \emph{arXiv preprint arXiv:1807.09173}, 2018.

\bibitem[Tschandl et~al.(2018)Tschandl, Rosendahl, and
  Kittler]{tschandl2018ham10000}
P~Tschandl, C~Rosendahl, and H~Kittler.
\newblock The ham10000 dataset, a large collection of multi-source
  dermatoscopic images of common pigmented skin lesions. scientific data 5,
  180161 (aug 2018), 2018.

\bibitem[Voigt \& Von~dem Bussche(2017)Voigt and Von~dem Bussche]{voigt2017eu}
Paul Voigt and Axel Von~dem Bussche.
\newblock The eu general data protection regulation (gdpr).
\newblock \emph{A Practical Guide, 1st Ed., Cham: Springer International
  Publishing}, 10:\penalty0 3152676, 2017.

\bibitem[Xie et~al.(2018)Xie, Lin, Wang, Wang, and Zhou]{xie2018differentially}
Liyang Xie, Kaixiang Lin, Shu Wang, Fei Wang, and Jiayu Zhou.
\newblock Differentially private generative adversarial network.
\newblock \emph{arXiv preprint arXiv:1802.06739}, 2018.

\bibitem[Yaghini et~al.(2019)Yaghini, Kulynych, and
  Troncoso]{yaghini2019disparate}
Mohammad Yaghini, Bogdan Kulynych, and Carmela Troncoso.
\newblock Disparate vulnerability: On the unfairness of privacy attacks against
  machine learning.
\newblock \emph{arXiv preprint arXiv:1906.00389}, 2019.

\bibitem[Yeom et~al.(2018)Yeom, Giacomelli, Fredrikson, and
  Jha]{yeom2018privacy}
Samuel Yeom, Irene Giacomelli, Matt Fredrikson, and Somesh Jha.
\newblock Privacy risk in machine learning: Analyzing the connection to
  overfitting.
\newblock In \emph{2018 IEEE 31st Computer Security Foundations Symposium
  (CSF)}, pp.\  268--282. IEEE, 2018.

\end{thebibliography}
\bibliographystyle{tmlr}
\appendix

\appendix
\section{Background on Generative Adversarial Networks}
Generative Adversarial Networks are the most common class of generative models. The original GAN algorithm \citep{goodfellow2014generative} learns a distribution of a dataset by adversarially training two modules, namely, a generator and a discriminator. The goal of the generator $G(w)$ is to learn a transformation that would convert a random vector $w$ to a realistic data sample. The goal of the discriminator module $D$ is to reliably distinguish synthetic samples (generated by the generator) from real samples. The mathematical formulation of this problem is as follows:
\begin{align*}
     \min _{G} \max _{D} &\mathbb{E}_{x \sim p_{r}(x)}[\log D(x)]+\\
     &\mathbb{E}_{x \sim p_{G}(x)}[\log (1-D(x)]\;.    
\end{align*}
Here, $P_r$ is the real data distribution, and $P_G$ is the distribution of $G(w)$. There have been many GAN variants proposed since \citep{arjovsky2017wasserstein,mirza2014conditional,berthelot2017began}. In this work, we examine our framework on the original GAN and some of its variations.


\section{Examples of common metrics that can be derived from the generalized metric}
\begin{eqnarray*}
	&&\mathrm{ACC} = \frac{{\rm TP}+{\rm TN}}{{\rm TP}+{\rm FP}+{\rm TN}+{\rm FN}}\\
	&&\mathrm{PPV}\text{ or }\mathrm{Precision}  = \frac{{\rm TP}}{{\rm TP}+{\rm FP}} \\
	&&\mathrm{TPR}\text{ or }\mathrm{Recall}  = \frac{\mathrm{TP}}{\mathrm{TP}+\mathrm{FN}} \\
    &&\mathrm{TNR}  = \frac{\mathrm{TN}}{\mathrm{FP}+\mathrm{TN}} \\
    &&\mathrm{WA} = \frac{w_{1} \mathrm{TP}+w_{2} \mathrm{TN}}{w_{1} \mathrm{TP}+w_{2} \mathrm{TN}+w_{3} \mathrm{FP}+w_{4} \mathrm{FN}}\;.
\end{eqnarray*}

\section{Background on differential privacy}
The definition of $\varepsilon$-differential privacy is given as follows:
\begin{definition}[$\varepsilon$-Differential Privacy]
We say a randomized algorithm $M$ is $\varepsilon$-differentially private if for any pair of neighbouring databases $S$ and $S'$  that differ by one record and any output event $E$, we have
\begin{eqnarray}
\label{eqn:dp}
	\prob(M(S)\in E) \;\;\leq\;\; e^\varepsilon\prob(M(S')\in E)\;. 
\end{eqnarray}
\end{definition}




\section{Detailed Proofs}
\subsection{Proof for Lemma \ref{lemma:bayes}}
\label{sec:proof:lemma:bayes}
\begin{proof}
As we can see from Definition \ref{def:adv}, the membership advantage is defined as $2{\rm Accuracy}(\cA)-1$. This means the Bayes classifier is given by $\sign\left(\prob\left(m=1|Q_S(z)\right)-\frac12\right)$\citep{sablayrolles2019white}. We get the optimal membership advantage by plugging in this Bayes classifier.  
\end{proof}
\subsection{Proof for Lemma \ref{lemma:bayes_generalized}}
\label{sec:proof:lemma:bayes_generalized}
\begin{proof}
When $b_{11}=b_{01}$ and $b_{10}=b_{00}$, our loss $\ell=\frac{a_{0}+a_{11} \mathrm{TP}+a_{10} \mathrm{FP}+a_{01} \mathrm{FN}+a_{00} \mathrm{TN}}{b_{0}+b_{11} (\mathrm{TP}+\mathrm{FN})+b_{00}(\mathrm{TN}+\mathrm{FP}) }=\frac{a_{0}+a_{11} \mathrm{TP}+a_{10} \mathrm{FP}+a_{01} \mathrm{FN}+a_{00} \mathrm{TN}}{b_{0}+b_{11} p+b_{00}(1-p) }$. It becomes a linear combination of TP, TN, FP and FN, which is also called cost sensitive classification as defined in \citep{elkan2001foundations}. As outlined in \citet{elkan2001foundations}, the optimal classifier would predict $1$ when 
\begin{eqnarray*}
    \eta(Q_S(z))a_{01}+(1-\eta(Q_S(z)))a_{00}\geq \eta(Q_S(z))a_{11}+(1-\eta(Q_S(z)))a_{10}\;.
\end{eqnarray*}

Rearranging the terms completes the proof.
\end{proof}
\subsection{Proof for Theorem \ref{prop:AdvI}}
\label{sec:proof:prop:AdvI}
\begin{proof}

By Equation \ref{eqn:bayesoptimal} and Definition \ref{def:individualprivacy}, we have $\AdvI_p(Q_S,z_0)=|\mathbb{P}(m=1|Q_S(z_0))-\mathbb{P}(m=-1|Q_S(z_0))|=|f_p(z_0)|$, and 1 follows. 

2 follows from Equation \ref{eqn:fp} and Equation \ref{eq:gen2}.

Then, we prove 3. If $\cA$ equals 1 at $Q_S(z_0)$, $2 \mathbb{P}[\cA(Q_S(z))=m|Q_S(z)=Q_S(z_0)]-1=2 \mathbb{P}[m=1|Q_S(z_0)]-1=\mathbb{P}[m=1|Q_S(z_0)]-\mathbb{P}[m=-1|Q_S(z_0)]$. If $\cA$ equals -1 at $Q_S(z_0)$, $2 \mathbb{P}[\cA(Q_S(z))=m|Q_S(z)=Q_S(z_0)]-1=2 \mathbb{P}[m=-1|Q_S(z_0)]-1=\mathbb{P}[m=-1|Q_S(z_0)]-\mathbb{P}[m=1|Q_S(z_0)]$. Thus, the maximum equals to $|\mathbb{P}(m=1|Q_S(z_0))-\mathbb{P}(m=-1|Q_S(z_0))|=\AdvI_p(Q_S,z_0)$.

Finally we show 4. By the post-processing property, the $\varepsilon$-differential privacy indicates that the query output satisfies for any record z, we have
\begin{eqnarray}
    \left|\log\frac{\prob(Q_S(z)|m=1)}{\prob(Q_S(z)|m=-1)}\right|\leq \varepsilon
\end{eqnarray}
Then 4 directly follows from 1, 2 and the fact that 
\begin{eqnarray}
    \frac{x-1}{x+1}=\tanh(\frac12\ln (x)).
\end{eqnarray}

\end{proof}
\subsection{Proof of Theorem  \ref{lemma:disc}}
\label{sec:proof:lemma:disc}
 \begin{proof}
 By the law of large numbers, we have $\hat{r}_j\xrightarrow{p}r_j$ and $\hat{q}_j\xrightarrow{p}q_j$. Then 1 follows from Slutsky’s theorem (Corollary \ref{coro:slut}) and $pr_j+(1-p)q_j>0$.
 To prove 2, we first derive the $(1-\delta/2)$ confidence intervals of $r_j$ and $q_j$ by \citet{clopper1934use}. Then we could divide the nominator and the denominator by $q_j$, and 2 follows from the fact that $\frac{x-1}{x+1}$ is a monotonically increasing function for $x>0$.
 \end{proof}
 
 \subsection{Proof of Theorem \ref{lemma:cont}}
\label{sec:proof:lemma:const}
 \begin{proof}
 By \citet{chen2017tutorial}, we have $\hat{r}_N(Q_S(z))\xrightarrow{p}r(Q_S(z))$ and $\hat{q}_N(Q_S(z))\xrightarrow{p}q(Q_S(z))$. Then 1 follows from Slutsky’s theorem (Corollary \ref{coro:slut}) and $pr(Q_S(z))+(1-p)q(Q_S(z))>0$. 
 To prove 2, we first derive the confidence intervals of $r(Q_S(z))$ and $q(Q_S(z))$ by \citet{chen2017tutorial}. Then we could divide the nominator and the  denominator by $q(Q_S(z))$, and  2 follows from the fact that $\frac{x-1}{x+1}$ is a monotonically increasing function for $x>0$.
 \end{proof}

\subsection{Proof of Theorem  \ref{lemma:opt}}\label{sec:proof:lemma:opt}
 \begin{proof}

Define \[r(Q_S(z_0))=\prob(Q_S(z_0)|m=1),~ q(Q_S(z_0))=\prob(Q_S(z_0)|m=-1),\] 
\[I(Q_S(z_0))=\mathbb{I}\left(\prob(m=1|Q_S(z_0))>t_\ell\right).\]

We will first show that 
\begin{equation}
\label{eqn:prof_advi}
\AdvI_{l,p}(Q_S,z_0)=\frac{pc_1r(Q_S(z_0))+c_2(1-p)q(Q_S(z_0))+pc_3r(Q_S(z_0))I(Q_s(z_0))+c_4(1-p)q(Q_S(z_0)I(Q_s(z_0))} {pd_1r(Q_S(z_0))+d_2(1-p)q(Q_S(z_0))+pd_3r(Q_S(z_0))I(Q_s(z_0))+d_4(1-p)q(Q_S(z_0)I(Q_s(z_0))}.
\end{equation}
Note that the individual privacy risk at a sample $z_0$ is written as

\begin{equation}
\label{eqn:advI:g0}
	\AdvI_{l,p}(Q_S,z_0) = 
	\frac{a_{0}+a_{11} \mathrm{TP(Q_S(z_0))}+a_{10} \mathrm{FP(Q_S(z_0))}+a_{01} \mathrm{FN(Q_S(z_0))}+a_{00} \mathrm{TN(Q_S(z_0))}}{b_{0}+b_{11} \mathrm{TP}+b_{10} \mathrm{FP(Q_S(z_0))}+b_{01} \mathrm{FN(Q_S(z_0))}+b_{00} \mathrm{TN(Q_S(z_0))}},\;    
\end{equation}
where  $a_0$, $b_0$, $a_{ij}$ and $b_{ij}$ are pre-defined scalars for $i=0,1$ and $j=0,1$ as in Equation \ref{eq:metric}. $\mathrm{TP}(Q_S(z_0))$, $\mathrm{FP}(Q_S(z_0))$, $\mathrm{FN}(Q_S(z_0))$ and $\mathrm{TN}(Q_S(z_0))$ are the conditional versions of $\mathrm{TP}$, $\mathrm{FP}$, $\mathrm{FN}$ and $\mathrm{TN}$ on $Q_S(z)=Q_S(z_0)$. The four terms can be written as \[\mathrm{TP}(Q_S(z_0))=\prob(\cA^*(Q_S(z))=1,m=1|Q_S(z_0)),\] \[\mathrm{FP}(Q_S(z_0))=\prob(\cA^*(Q_S(z))=1,m=-1|Q_S(z_0)),\] \[\mathrm{FN}(Q_S(z_0))=\prob(\cA^*(Q_S(z))=-1,m=1|Q_S(z_0)),\] \[\mathrm{TN}(Q_S(z_0))=\prob(\cA^*(Q_S(z))=-1,m=-1|Q_S(z_0)).\] 

By \[\mathrm{FP}(Q_S(z_0))=P(m=-1|Q_S(z_0))-\mathrm{TN}(Q_S(z_0)) ~\&~  \mathrm{FN}(Q_S(z_0))=P(m=1|Q_S(z_0))-\mathrm{TP}(Q_S(z_0)),\] Equation \ref{eqn:advI:g0} can be re-written as 
\begin{equation}
\label{eqn: advI:2}
	\frac{a_{0}+a_{10} \prob(m=-1|Q_S(z_0))+a_{01} \prob(m=1|Q_S(z_0))+(a_{11}-a_{01}) \mathrm{TP}(Q_S(z_0))+(a_{00}-a_{10})\mathrm{TN}(Q_S(z_0))} {b_{0}+b_{10} \prob(m=-1|Q_S(z_0))+b_{01} \prob(m=1|Q_S(z_0))+(b_{11}-b_{01}) \mathrm{TP}(Q_S(z_0))+(b_{00}-b_{10})\mathrm{TN}(Q_S(z_0))}.
\end{equation}
We first re-write \[\mathrm{TP}(Q_S(z_0))=\prob(m=1|Q_S(z_0))\mathbb{I}(\prob(m=1|Q_S(z_0))>t_\ell)\] and \[\mathrm{TN}(Q_S(z_0))=\prob(m=-1|Q_S(z_0))\mathbb{I}(\prob(m=1|Q_S(z_0))\le t_\ell)\] by plug-in Equation \ref{eqn:bayesoptimal}. Then, by \begin{equation}\label{eqn:prof3}
\prob(m=1|Q_S(z_0))=\frac{\prob(Q_S(z)=Q_S(z_0)|m=1)p}{\prob(Q_S(z)=Q_S(z_0)|m=1)p+\prob(Q_S(z)=Q_S(z_0)|m=-1)(1-p)}\end{equation} and \[\prob(m=-1|Q_S(z_0))=\frac{\prob(Q_S(z)=Q_S(z_0)|m=-1)(1-p)}{\prob(Q_S(z)=Q_S(z_0)|m=1)p+\prob(Q_S(z)=Q_S(z_0)|m=-1)(1-p)},\] we have Equation \ref{eqn: advI:2} equal to 
\[{\frac{pc_1r(Q_S(z_0))+c_2(1-p)q(Q_S(z_0))+pc_3r(Q_S(z_0))I(Q_s(z_0))+c_4(1-p)q(Q_S(z_0)I(Q_s(z_0))} {pd_1r(Q_S(z_0))+d_2(1-p)q(Q_S(z_0))+pd_3r(Q_S(z_0))I(Q_s(z_0))+d_4(1-p)q(Q_S(z_0)I(Q_s(z_0))}}.
\]

Hence, we have proved Equation \ref{eqn:prof_advi}.

In the next step, we will show the consistency of the proposed estimator. Note that, we have $\hat{r}(Q_S(z_0))\xrightarrow{p} r(Q_S(z_0))$,  $\hat{q}(Q_S(z_0))\xrightarrow{p} q(Q_S(z_0))$ by the proof of  Theorems \ref{lemma:disc} and \ref{lemma:cont}. By Equation \ref{eqn:prof3},  \[I(Q_S(z_0))=\mathbb{I}\left(\frac{pr(Q_S(z_0))}{pr(Q_S(z_0))+(1-p)q(Q_S(z_0))}>t_\ell\right)=\mathbb{I}((1-t_\ell)pr(Q_S(z_0)) > t_\ell(1-p)q(Q_S(z_0))).\] It is a function of $r$ and $q$, and it is continuous except when $\prob(m=1|Q_S(z_0))=t_\ell$. By the second condition, we have $\prob(m=1|Q_S(z_0))\neq t_\ell$. Thus,  $\mathbb{I}((1-t_\ell)p\hat{r} > t_l(1-p)\hat{q})\xrightarrow{p} I(Q_S(z_0))$ follows from the continuous mapping theorem (Theorem \ref{thm:contthm}). 

By applying Slutsky's theorem (Corollary \ref{coro:slut}), we show that the nominator of Equation \ref{eqn:AdvI:g:estimate} converges in probability to the nominator in Equation \ref{eqn:prof_advi}. Similarly, we can prove that the denominator in Equation \ref{eqn:AdvI:g:estimate} converges in probability to the denominator in Equation \ref{eqn:prof_advi}. Because of the third condition, the denominator in Equation \ref{eqn:prof_advi} is nonzero. Thus, 1 follows by Slutsky's theorem (Corollary \ref{coro:slut}).

Finally, we observe that given $(1-\delta/2)$-confidence intervals for $\prob(Q_S(z)|m=1)$, $\prob(Q_S(z)|m=-1)$, and the independence of $\prob(Q_S(z)|m=1)$ and $\prob(Q_S(z)|m=-1)$, the joint $(1-\delta)$-confidence interval of  $\prob(Q_S(z)|m=1)$ and $\prob(Q_S(z)|m=-1)$ is simply the union of the two (using union bound), and 2 follows.
\end{proof}

\subsection{Proof of Theorem \ref{thm:maci_discrete}}\label{sec:proof:thm:maci_disc}
\begin{proof}
First note that
\begin{align*}&\cA^*(Q_S)  = \mathbb{E}_Z|f_p(Z)|\\	&=\mathbb{E}_Z\left| \frac{\mathbb{P}(Q_S(Z)|M=1)p-\mathbb{P}(Q_S(Z)|M=-1)(1-p)}{\mathbb{P}(Q_S(Z))}\right| \\
&=\sum\limits_{j\in\cQ}\left| \mathbb{P}(Q_S(Z)=j|M=1)p-\mathbb{P}(Q_S(Z)=j|M=-1)(1-p)\right|
 	 	\end{align*}
 It is sufficient to show that $ \forall j\in \cQ$,
 \begin{align}
 \label{eqn:ma:discrete}
 \begin{split}
 &\left|\frac{p}{N_1}\sum\limits_{i=1}^{N_1} \mathbb{I}(Q_S(X_i)=j)-\frac{1-p}{N_2}\sum\limits_{i=1}^{N_2} \mathbb{I}(Q_S(Y_i)=j)\right| \xrightarrow{p}\left| \mathbb{P}(Q_S(Z)=j|M=1)p-\mathbb{P}(Q_S(Z)=j|M=-1)(1-p)\right|,
  \end{split}
 \end{align} and then (1) follows by Slutsky's Theorem (Corollary \ref{coro:slut}). 
 
 To prove Equation (\ref{eqn:ma:discrete}), we first show that 
 \begin{align*}
  &\frac{1}{N_1} \sum\limits_{i=1}^{N_1}  \mathbb{I}(Q_S(X_i)=j)\xrightarrow{p} \mathbb{P}(Q_S(Z)=j|M=1),\\
 & \frac{1}{N_2} \sum\limits_{i=1}^{N_2}  \mathbb{I}(Q_S(Y_i)=j)\xrightarrow{p} \mathbb{P}(Q_S(Z)=j|M=-1)
 \end{align*} by the law of large numbers. Then Equation (\ref{eqn:ma:discrete}) follows by applying the continuous mapping theorem (Theorem \ref{thm:contthm}). 
 
 Next, we prove (2).   It is easy to verify that $\forall i \in \{1,\cdots,N_1\}$, $\forall x_1,\cdots,x_i,\cdots,x_{N_1},y_1,\cdots,y_{N_2},x_i^{'} \in \cZ$, we always have 
 \begin{align*}
     &|W(x_1,\cdots,x_i,\cdots,x_{N_1},y_1,\cdots,y_{N_2})-W(x_1,\cdots,x_i^{'},\cdots,x_{N_1},y_1,\cdots,y_{N_2})|\le \frac{2p}{N_1}=\frac{2}{N}.
 \end{align*} 
 Similarly, for $\forall i \in \{1,\cdots,N_2\}$ and 
 
$\forall x_1,\cdots,x_{N_1},y_1,\cdots,y_i,\cdots,y_{N_2},y_i^{'} \in \cZ$, we always have 
 
 \begin{align*}
     &|W(x_1,\cdots,x_{N_1},y_1,\cdots,y_i,\cdots,y_{N_2})-W(x_1,\cdots,x_{N_1},y_1,\cdots,y_i^{'},\cdots,y_{N_2})|\le \frac{2(1-p)}{N_2}=\frac{2}{N}.
 \end{align*}

  Then by McDiarmid's inequality (Theorem \ref{thm:mcdiarmid}), $\mathbb{P}(|W_N-\mathbb{E}W_N|\ge t)\le 2\exp \left( -\frac{N}{2}t^2\right)$. Let $\delta=2\exp \left( -\frac{N}{2}t^2\right)$, and we have (2). 
\end{proof}

\subsection{Proof of Theorem \ref{thm:maci_continuous}}\label{proof:thm:maci_con}
\begin{proof}
(1) We first prove the consistency. Note that
\begin{eqnarray*}\cA^*(Q_S)&&= \mathbb{E}_Z|f_p(Z)|\\&&=\mathbb{E}_Z\left| \frac{\mathbb{P}(Q_S(Z)|M=1)p-\mathbb{P}(Q_S(Z)|M=-1)(1-p)}{\mathbb{P}(Q_S(Z))}\right| \\
&&=\int\left| r(x)p-q(x)(1-p)\right|dx,
 	 	\end{eqnarray*}
 where  $r(x)=\prob(Q_S(z)=x|m=1)$ and $q(x)=\prob(Q_S(z)=x|m=-1)$. 
 Rewrite $U_N$ as 
\begin{align*}
    U_N&=\int |\frac{p}{N_1h^d}\sum\limits_{i=1}^{N_1}K(\frac{x-Q_S(X_i)}{h})-\frac{1-p}{N_2h^d}\sum\limits_{i=1}^{N_2}K(\frac{x-Q_S(Y_i)}{h})|dx\\
    &=\int \left| p\left( \frac{1}{N_1h^d}\sum\limits_{i=1}^{N_1}K(\frac{x-Q_S(X_i)}{h})-r(x)\right)+ (pr(x)-(1-p)q(x))+(1-p)\left(q(x)-\frac{1}{N_2h^d}\sum\limits_{i=1}^{N_2}K(\frac{x-Q_S(Y_i)}{h}\right)  \right| dx.
\end{align*}
Then 
\begin{align*}
    |U_N-\cA^*(Q_S)|&\le \int \left| p\left( \frac{1}{N_1h^d}\sum\limits_{i=1}^{N_1}K(\frac{x-Q_S(X_i)}{h})-r(x)\right)+
   (1-p)\left(q(x)-\frac{1}{N_2h^d}\sum\limits_{i=1}^{N_2}K(\frac{x-Q_S(Y_i)}{h}\right)  \right| dx\\
   &\le p\int \left|\left( \frac{1}{N_1h^d}\sum\limits_{i=1}^{N_1}K(\frac{x-Q_S(X_i)}{h})-r(x)\right)\right|dx+(1-p)\int\left|\left(q(x)-\frac{1}{N_2h^d}\sum\limits_{i=1}^{N_2}K(\frac{x-Q_S(Y_i)}{h}\right)  \right| dx\\& \xrightarrow{p} 0, 
\end{align*}
where the last step follows from Theorem \ref{thm: kde_converge} and Slutsky's theorem (Corollary \ref{coro:slut}). 

(2) It is easy to verify that $\forall i \in \{1,\cdots,N_1\}$, \\
$\forall x_1,\cdots,x_i,\cdots,x_{N_1},y_1,\cdots,y_{N_2},x_i^{'} \in \cZ$, we always have 
 \begin{align*}
     &|U(x_1,\cdots,x_i,\cdots,x_{N_1},y_1,\cdots,y_{N_2})-U(x_1,\cdots,x_i^{'},\cdots,x_{N_1},y_1,\cdots,y_{N_2})|\le \frac{2p}{N_1}=\frac{2}{N}.
 \end{align*}
 
Similarly, for $\forall i \in \{1,\cdots,N_2\}$ and 
      
$\forall x_1,\cdots,x_{N_1},y_1,\cdots,y_i,\cdots,y_{N_2},y_i^{'} \in \cZ$, we always have \begin{align*}
     &|U(x_1,\cdots,x_{N_1},y_1,\cdots,y_i,\cdots,y_{N_2})-U(x_1,\cdots,x_{N_1},y_1,\cdots,y_i^{'},\cdots,y_{N_2})|\le \frac{2(1-p)}{N_2}=\frac{2}{N}.
 \end{align*}
  Then by McDiarmid's inequality (Theorem \ref{thm:mcdiarmid}), $\mathbb{P}(|U_N-\mathbb{E}U_N|\ge t)\le 2\exp \left( -\frac{N}{2}t^2\right)$. Let $\delta=2\exp \left( -\frac{N}{2}t^2\right)$, and we have (2). 
\end{proof}

\subsection{Proof of Theorem \ref{thm:gen_metric}}
\label{sec:proof_gen_metric}
\begin{proof}
 First, we would bound $ \Adv_{\ell, p}(Q_S)-\Adv_{\ell, p}(Q_S, \hat{\cA})$. 
By the definition of $\cE(\cD,S,Q_S)$, we have \[\prob(m=1,\cA(Q_S(z))=1)+\prob(m=1,\cA(Q_S(z))=-1)=p\] and \[\prob(m=-1,\cA(Q_S(z))=1)+\prob(m=-1,\cA(Q_S(z))=-1)=1-p.\] Thus, we could rewrite the membership advantage under the generalized metric $\Adv_{\ell, p}(Q_S, \cA)$ as follows:   \begin{equation*}
 \frac{a_{0}+a_{11}p+a_{00}(1-p)-(a_{11}-a_{01})\prob(m=1,\cA(Q_S(z))=-1)-(a_{00}-a_{10})\prob(m=-1,\cA(Q_S(z))=1)}{b_{0}+b_{11} p+b_{00}(1-p)}.
\end{equation*}
Define $c=(a_{00}-a_{10})/(a_{00}-a_{10}+a_{11}-a_{01})$, $\beta_0=\left(a_{0}+a_{11}p+a_{00}(1-p)\right)/\left(b_{0}+b_{11} p+b_{00}(1-p)\right)$, and $\beta_1=\left(a_{00}-a_{10}+a_{11}-a_{01}\right)/\left(b_{0}+b_{11} p+b_{00}(1-p)\right)$. 
We have
\begin{equation}
\label{eqn:advrewrtie}
\Adv_{\ell, p}(Q_S, \cA)=\beta_0-\beta_1\left[c\prob(m=-1,\cA(Q_S(z))=1)+(1-c)\prob(m=1,\cA(Q_S(z))=-1)\right].\end{equation}
Because $a_{00}>a_{10}$  and $a_{11}>a_{01}$, $c\in(0,1)$ and $\beta_1>0$. Define $\Theta=\{\cA:\cA(Q_S(z))=\sign\left(\phi(Q_S(z))-t_\ell\right), \phi:\mathbb{R}\to[0,1]\}$. Next, we would bound $ \Adv_{\ell, p}(Q_S)-\Adv_{\ell, p}(Q_S, \hat{\cA})$. 
\begin{subequations}
\begin{align}
        \Adv_{\ell, p}(Q_S)-\Adv_{\ell, p}(Q_S, \hat{\cA})&=  \max_{\cA} \Adv_{\ell, p}(Q_S, {\cA})-\Adv_{\ell, p}(Q_S, \hat{\cA})\nonumber\\
        &=\max_{ \cA\in\Theta
        } \Adv_{\ell, p}(Q_S, {\cA})-\Adv_{\ell, p}(Q_S, \hat{\cA})\label{eqn:s1}\\
 &=\beta_1\left[c\prob(m=-1,\hat{\cA}(Q_S(z))=1)+(1-c)\prob(m=1,\hat{\cA}(Q_S(z))=-1)\right]-\nonumber\\&\beta_1\inf_{\cA}\left[c\prob(m=-1,{\cA}(Q_S(z))=1)+(1-c)\prob(m=1,{\cA}(Q_S(z))=-1)\right]\label{eqn:s2}\\
&\le \beta_1\mathbb{E}_{Q_S(z)}(|\hat{\eta}(Q_S(z))-\eta(Q_S(z))|^\sigma)\label{eqn:s3} \xrightarrow{p} 0.
		\end{align}
	\end{subequations}
The step in Equation \ref{eqn:s1} follows from Lemma \ref{lemma:bayes_generalized}. The step in Equation \ref{eqn:s2} follows from Equation \ref{eqn:advrewrtie} and $\beta_1>0$. The step in Equation \ref{eqn:s3} follows from $\beta_1>0$, $c\in(0,1)$ and Lemma 4 \citep{menon2013statistical}. Note that $\Adv_{\ell, p}(Q_S, \hat{\cA})\le\Adv_{\ell, p}(Q_S)$. Hence we have 
$\Adv_{\ell, p}(Q_S, \hat{\cA}) \xrightarrow{p} \Adv_{\ell, p}(Q_S)$.


\end{proof}

\section{Model architectures and hyper--parameters}
Here we outline the different layers used in the model architectures for different datasets. The last layers of discriminators for WGAN experiments do not have sigmoid activation functions. The hyper-parameters are chosen the same same as \citep{goodfellow2014generative, gulrajani2017improved}.
\subsection{MNIST}
\subsubsection{Generator layers}
\begin{itemize}
\itemsep0em
    \item Dense(units$=512$, input size$=100$)
    \item LeakyReLU($\alpha=0.2$)
    \item Dense(units$=512$)
    \item LeakyReLU($\alpha=0.2$)
    \item Dense(units$=1024$)
    \item LeakyReLU($\alpha=0.2$)
    \item Dense(units$=784$, activation = 'tanh')
\end{itemize}

\subsubsection{Discriminator layers}
\begin{itemize}
\itemsep0em
    \item Dense(units$=2048$)
    \item LeakyReLU($\alpha=0.2$)
    \item Dense(units$=512$)
    \item LeakyReLU($\alpha=0.2$)
    \item Dense(units$=256$)
    \item LeakyReLU($\alpha=0.2$)
    \item Dense(units$=1$, activation = 'sigmoid')
\end{itemize}

\subsection{CIFAR--10}
\subsubsection{Generator layers}
\begin{itemize}
\itemsep0em
    \item Dense(units=$2\times2\times512$)
    \item Reshape(target shape$= (2, 2, 512)$)
    \item Conv2DTranspose(filters$= 128$, kernel size$=4$, strides$=1$)
    \item ReLU()
    \item Conv2DTranspose(filters$= 64$, kernel size$=4$, strides$=2$, padding$=1$)
    \item ReLU()
    \item Conv2DTranspose(filters$= 32$, kernel size$=4$, strides$=2$, padding$=1$)
    \item ReLU()
    \item Conv2DTranspose(filters$= 3$, kernel size$=4$, strides$=2$,padding$=1$, activation = 'tanh')
\end{itemize}

\subsubsection{Discriminator layers}
\begin{itemize}
\itemsep0em
    \item Conv2D(filters$= 64$, kernel size$=5$, strides$=2$)
    \item Conv2D(filters$= 128$, kernel size$=5$, strides$=2$)
    \item LeakyReLU($\alpha=0.2$)
    \item Conv2D(filters$= 128$, kernel size$=5$, strides$=2$)
    \item LeakyReLU($\alpha=0.2$)
    \item Conv2D(filters$= 256$, kernel size$=5$, strides$=2$)
    \item LeakyReLU($\alpha=0.2$)
    \item Dense(units$=1$, activation = 'sigmoid')
\end{itemize}

\subsection{Skin-cancer MNIST}
\subsubsection{Generator layers}
\begin{itemize}
\itemsep0em
    \item Dense(units=$4\times4\times512$)
    \item Reshape(target shape$= (4, 4, 512)$)
    \item Conv2DTranspose(filters$= 256$, kernel size$=5$, strides$=2$)
    \item ReLU()
    \item Conv2DTranspose(filters$= 128$, kernel size$=5$, strides$=2$)
    \item ReLU()
    \item Conv2DTranspose(filters$= 64$, kernel size$=5$, strides$=2$)
    \item ReLU()
    \item Conv2DTranspose(filters$= 3$, kernel size$=5$, strides$=2$,activation = 'tanh')
\end{itemize}

\subsubsection{Discriminator layers}
\begin{itemize}
\itemsep0em
    \item Conv2D(filters$= 64$, kernel size$=5$, strides$=2$)
    \item Conv2D(filters$= 128$, kernel size$=5$, strides$=2$)
    \item LeakyReLU($\alpha=0.2$)
    \item Conv2D(filters$= 128$, kernel size$=5$, strides$=2$)
    \item LeakyReLU($\alpha=0.2$)
    \item Conv2D(filters$= 256$, kernel size$=5$, strides$=2$)
    \item LeakyReLU($\alpha=0.2$)
    \item Dense(units$=1$, activation = 'sigmoid')
\end{itemize}

\subsubsection{Toy binary classifier layers}
\label{toy}
\begin{itemize}
\itemsep0em
    \item Conv2D(filters$= 512$, kernel size$=5$, strides$=2$)
    \item LeakyReLU($\alpha=0.2$)
    \item Dense(units$=2$, activation = 'relu')
    \item Dense(units$=1$, activation = 'sigmoid')
\end{itemize}

\section{Auxiliary lemmas and theorems}
\begin{thm}[Slutsky’s theorem]\label{thm:slut_thm}
Let $X_n\xrightarrow{d} X$ and $Y_n\xrightarrow{d} c$, where $c$ is a constant. Then 
\begin{enumerate}
    \item $X_n+Y_n\xrightarrow{d} X+c;$
    \item $X_nY_n\xrightarrow{d} cX;$
       \item $X_n/Y_n\xrightarrow{d} X/c$ if $c\neq 0$.
\end{enumerate}
\end{thm}
\begin{coro}[Slutsky’s theorem]\label{coro:slut}
Let $X_n\xrightarrow{p} c_0$ and $Y_n\xrightarrow{p} c_1$, where $c_0$ and $c_1$ are constants. Then 
\begin{enumerate}
    \item $X_n+Y_n\xrightarrow{p} c_0+c_1;$
    \item $X_nY_n\xrightarrow{p} c_0c_1;$
       \item $X_n/Y_n\xrightarrow{p} c_0/c_1$ if $c_1\neq 0$.
\end{enumerate}
\end{coro}
\begin{proof}
 When $c$ is a constant, $Z_n\xrightarrow{p} c$ is equivalent to  $Z_n\xrightarrow{d} c$. Thus we have $X_n\xrightarrow{d} c_0$ and $Y_n\xrightarrow{d} c_1$. By applying Theorem \ref{thm:slut_thm} (1), we have \[X_n+Y_n\xrightarrow{d} c_0+c_1.\]
 Since $c_0+c_1$ is a constant, it follows \[X_n+Y_n\xrightarrow{p} c_0+c_1.\]
 Similarly, we can prove 2 and 3. 
\end{proof}
\begin{thm}[Continuous mapping theorem]\label{thm:contthm}
Let $f: \mathbb{R}^m\to\mathbb{R}^q$ be
a measuarable function. Define \[C_f=\{x: f~is~ continuous ~at~ x\}.\] If $X_n\xrightarrow{p}X$ and $\prob(X\in C_f)=1$, then \[f(X_n)\xrightarrow{p}f(X).\]

\end{thm}

\begin{thm}
[\citet{devroye1985nonparametric}]
\label{thm: kde_converge}
Let $p_N$ be an automatic kernel estimate with arbitary density $K$, as defined in Equation (\ref{eqn: KDE_def}). If $h+(nh^d)^{-1}\rightarrow 0$ completely (almost surely, in probability), then $\int            |p_N-p|\rightarrow 0$  (almost surely, in probability), for all density $p$ on $\mathbb{R}^d$.
\end{thm}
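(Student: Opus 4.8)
The plan is to prove this by reducing the data-dependent (``automatic'') bandwidth to the deterministic case, the key point being that ``automatic'' means the smoothing factor $h=h_N(X_1,\dots,X_N)$ is itself a function of the sample, so one cannot apply McDiarmid or the bias estimates to $p_N$ directly. Write $f_{N,h}(x)=\frac{1}{Nh^d}\sum_{i=1}^N K((x-X_i)/h)$ for the estimate at a \emph{fixed deterministic} $h$, and set $J_N(h)=\int_{\reals^d}|f_{N,h}(x)-p(x)|\,dx$. The automatic estimate is the random composition $p_N=f_{N,h_N}$, with error $J_N(h_N)$; the whole difficulty is that $h_N$ and the summands share the same randomness, so the naive bounded-differences argument fails. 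The strategy is to isolate a fluctuation term whose control is \emph{uniform in the bandwidth}, and then trap the random $h_N$ inside a deterministic good range.

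First I would dispose of the deterministic mean behaviour. For fixed $h$, split $J_N(h)$ through $\mathbb{E}f_{N,h}=K_h\ast p$: the bias $\int|K_h\ast p-p|$ is the $L_1$ error of an approximate identity and tends to $0$ as $h\to0$ for \emph{every} density $p$, needing no smoothness of $p$ or $K$ (only $K\ge 0$, $\int K=1$, via continuity of translation in $L_1$ and dominated convergence). The mean fluctuation $\mathbb{E}\int|f_{N,h}-\mathbb{E}f_{N,h}|$ tends to $0$ whenever $Nh^d\to\infty$, by the standard $L_1$ variance bound. Crucially, the bias piece is monotone increasing in $h$ and the variance piece monotone decreasing, so $\sup_{h\in[a_N,b_N]}\mathbb{E}J_N(h)\to 0$ for any deterministic brackets with $b_N\to0$ and $Na_N^d\to\infty$; the monotone majorants give this uniformity for free.

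The crux is the fluctuation $J_N(h)-\mathbb{E}J_N(h)$ evaluated at the random $h_N$. The pivotal observation is that, as a function of the sample at \emph{fixed} $h$, $J_N$ obeys the bounded-difference property with constant $2/N$ that is \emph{independent of $h$}: replacing one $X_i$ changes $\int|f_{N,h}-p|$ by at most $\frac{1}{Nh^d}\int|K(\tfrac{x-X_i}{h})-K(\tfrac{x-X_i'}{h})|\,dx\le\frac{2}{N}$. McDiarmid's inequality then gives $\prob(|J_N(h)-\mathbb{E}J_N(h)|\ge t)\le 2e^{-Nt^2/2}$ for every fixed $h$, a bound that is summable in $N$ and uniform over all bandwidths. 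To transfer this to $h_N$, I would extract deterministic brackets $[a_N,b_N]$ from the hypothesis $h_N+(Nh_N^d)^{-1}\to0$ (in probability, almost surely, or completely, according to the assumed mode), trap $h_N\in[a_N,b_N]$, and bound $\sup_{h\in[a_N,b_N]}|J_N(h)-\mathbb{E}J_N(h)|$ via a geometric grid on the bandwidth range together with a bracketing argument that exploits the $h$-free concentration constant. Combining the uniform mean bound with this uniform deviation bound yields $J_N(h_N)\to0$ in the same mode as the bandwidth hypothesis; and since complete, almost-sure and in-probability $L_1$ convergence coincide for kernel estimates (another theorem of Devroye), the three cases in the statement follow simultaneously.

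The genuine obstacle is this final uniform-in-bandwidth step. Because $K$ is an \emph{arbitrary} density with no assumed smoothness, one cannot interpolate $J_N(h)$ between grid points by differentiating in $h$; instead the variation across the bandwidth range must be controlled through monotone majorants and the bandwidth-independent bounded-difference constant, which is exactly the technical heart of Devroye's $L_1$ theory. For this reason the result is invoked by citation rather than reproved in full.
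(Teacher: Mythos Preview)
The paper does not prove this theorem at all: it is listed under ``Auxiliary lemmas and theorems'' and simply cited from \cite{devroye1985nonparametric} without argument. Your proposal therefore goes far beyond the paper, which treats the result as a black box; you yourself note in the final sentence that the result is invoked by citation rather than reproved. Your sketch of the bias--variance split, the bandwidth-independent bounded-difference constant, and the uniform-in-$h$ concentration is a faithful outline of Devroye's $L_1$ theory, but none of it appears in the paper and none of it is needed for the paper's purposes.
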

\begin{thm}
[McDiarmid's inequality] \label{thm:mcdiarmid}
Let $f: \cZ^m\to\mathbb{R}$ be a function satisfying 
\begin{align*}
{\rm\ } |f(z_1,\dots,z_i,\dots,z_m) - f(z_1,\dots,z^{'}_i,\dots,z_m)| \leq c_i\\
(\forall i, \forall z_1 \dots z_m, z^{'}_i \in \cZ).
\end{align*}
Denote \[v=\frac{1}{4}\sum_{i=1}^mc_i^2.\]
Let $Z_1,\cdots,Z_m$ be independent variables with support on $\cZ$. Then
\[\mathbb{P}(f(Z_1,\cdots,Z_m)-\mathbb{E}[f(Z_1,\cdots,Z_m)]\ge t)\le \exp{(-t^2/(2v))}\]
and 
\[\mathbb{P}(f(Z_1,\cdots,Z_m)-\mathbb{E}[f(Z_1,\cdots,Z_m)]\le -t)\le \exp{(-t^2/(2v))}.\]
\end{thm}

 \begin{lemma}
 [\citet{chen2017tutorial}]
 \label{lemma:kde:ci}
 	 	With probability $(1-\delta/2)$, we have 
 	 	\begin{eqnarray}
 	 		r(x) \in [\widehat{r}_{\rm lower}(x),\;\; \widehat{r}_{\rm upper}(x)]\;,
 	 	\end{eqnarray}
 	 	where 
 	 	\begin{eqnarray*}
 	 		\widehat{r}_{\rm lower}(x) = \widehat{r}_{N}(x)-z_{1-\delta / 4} \sqrt{\frac{\mu_{K} \cdot \widehat{r}_{N}(x)}{N h^{d}}}\;,\\
 	 		\widehat{r}_{\rm upper}(x) = \widehat{r}_{N}(x)+z_{1-\delta / 4} \sqrt{\frac{\mu_{K} \cdot \widehat{r}_{N}(x)}{N h^{d}}}\;.
 	 	\end{eqnarray*}
 	 	and $\mu_{K}:=\int K^{2}(x) d x$, $z_{1-\delta/4}$ is the $(1- \delta/4)$ quantile of a standard normal distribution. 
\end{lemma}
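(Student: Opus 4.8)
The plan is to obtain a pointwise central limit theorem for the kernel density estimate $\widehat{p}_N(x)$, replace its asymptotic variance by a consistent plug-in, and invert the resulting asymptotically pivotal quantity to read off the stated interval. Write $\widehat{p}_N(x)=\frac1N\sum_{i=1}^N \xi_i$ with $\xi_i=h^{-d}K\!\left(\tfrac{x-v_i}{h}\right)$, where $v_1,\dots,v_N$ are i.i.d.\ with density $p$; I work in the bandwidth regime standing throughout the paper ($h\to 0$, $Nh^d\to\infty$, as already used for Theorem~\ref{thm: kde_converge}) and under the usual conditions that $K$ integrates to one, is square-integrable with $\mu_K=\int K^2$, and that $p$ is continuous at $x$.

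First I would record the moments of a single summand. The substitution $u=(x-v)/h$ gives $\E[\xi_1]=\int K(u)\,p(x-hu)\,du=p(x)+O(h^2)$ (the estimator bias) and $h^{d}\Var(\xi_1)=\int K^2(u)\,p(x-hu)\,du-h^{d}\big(\E[\xi_1]\big)^2\longrightarrow \mu_K\,p(x)$, hence $\Var\!\big(\widehat{p}_N(x)\big)=\frac{1}{Nh^{d}}\big(\mu_K p(x)+o(1)\big)$; since both bias and variance vanish, $\widehat{p}_N(x)\xrightarrow{p}p(x)$ by Chebyshev.

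Second, I would apply the Lindeberg--Feller CLT to the triangular array $\{\xi_i-\E\xi_i\}_{i=1}^N$ (i.i.d.\ within a row, depending on $N$ only through $h$); the Lindeberg condition follows from $\|\xi_1\|_\infty=O(h^{-d})$ together with $Nh^d\to\infty$, giving
\[
\sqrt{Nh^{d}}\big(\widehat{p}_N(x)-\E\widehat{p}_N(x)\big)\ \xrightarrow{d}\ \mathcal{N}\!\big(0,\ \mu_K\,p(x)\big).
\]
Under undersmoothing ($Nh^{d+4}\to0$) the scaled bias $\sqrt{Nh^{d}}\cdot O(h^2)$ is $o(1)$, so $\E\widehat{p}_N(x)$ may be replaced by $p(x)$; combining this with $\mu_K\widehat{p}_N(x)\xrightarrow{p}\mu_K p(x)$ and Slutsky's theorem yields the pivot
\[
\frac{\sqrt{Nh^{d}}\big(\widehat{p}_N(x)-p(x)\big)}{\sqrt{\mu_K\,\widehat{p}_N(x)}}\ \xrightarrow{d}\ \mathcal{N}(0,1).
\]
Since the interval is two-sided with target coverage $1-\delta/2$, each tail carries mass $\delta/4$, so one uses the quantile $z_{1-\delta/4}$; rearranging $\mathbb{P}\big(|\,\cdot\,|\le z_{1-\delta/4}\big)\to 1-\delta/2$ gives $p(x)\in[\widehat{p}_{\rm lower}(x),\widehat{p}_{\rm upper}(x)]$ with the stated asymptotic probability.

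The main obstacle is the bias term: the interval is centered at $\widehat{p}_N(x)$ with no bias correction, so it is an honest interval for $p(x)$ only under an undersmoothing assumption on $h$ (otherwise it is asymptotically valid for the smoothed target $\E\widehat{p}_N(x)$ rather than for $p(x)$ itself). I would therefore state the lemma as an asymptotic confidence interval, following the cited tutorial, and note explicitly that in the paper's bandwidth regime the bias is negligible at the $\sqrt{Nh^{d}}$ scale. The remaining ingredients --- the change-of-variables limits (needing only continuity of $p$ at $x$), the Lindeberg check, and consistency of the variance plug-in --- are routine.
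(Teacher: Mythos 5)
The paper gives no proof of this lemma at all---it is imported verbatim from the cited KDE tutorial---and your derivation is precisely the standard argument behind that citation (pointwise Lindeberg--Feller CLT for the kernel estimate with variance $\mu_K p(x)/(Nh^d)$, consistent plug-in of $\widehat{p}_N(x)$ for $p(x)$ via Slutsky, and two-sided inversion with tail mass $\delta/4$), so it is correct and matches the intended justification. Your explicit caveat that the interval is only \emph{asymptotically} valid, and covers $p(x)$ rather than the smoothed target $\E\,\widehat{p}_N(x)$ only under an undersmoothing condition on $h$, is a genuine subtlety that the paper's bare statement of the lemma (which reads as exact coverage $1-\delta/2$) silently elides.
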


\section{Connection to differential privacy}
As shown in Theorem~\ref{prop:AdvI}, both the optimal membership advantage and the individual privacy risk are bounded by differential privacy guarantees. We construct a toy dataset from MNIST by forming a new imbalanced dataset with $6900$ digit zeros and $700$ digit sixes. This dataset is also used for anomaly detection \citep{bandaragoda2014efficient}. We set $p=0.5$ for simplicity. Figure \ref{fig:dp-gan} shows the optimal membership advantage of the DP-cGAN \citep{torkzadehmahani2019dp} with different choices of the privacy budget $\varepsilon$. As we can see here, the theoretic upper bound given by $\tanh(\varepsilon/2)$ is much larger than the estimated optimal membership advantage. Figure~\ref{fig:dp-loss} shows the individual privacy risks for both $\varepsilon=2$ and $\varepsilon=10$. As expected, even the highest individual privacy risk is strictly bounded by the upper bound derived from the privacy budget $\varepsilon$. The upper bound $(tanh(\frac{2}{2}) = 0.762$ for $\epsilon=2$, and  $tanh(\frac{10}{2}) = 0.999$ for $\varepsilon=10$). These demonstrations seem to indicate that if membership privacy is desired, using differentially private methods can lead to far too conservative models (which may lead to poorer model utility). However, it may also be that the privacy accounting in DP-cGAN is loose, thereby leading to an overestimation of $\varepsilon$.

\begin{figure}
	\centering
\includegraphics[scale = 0.5]{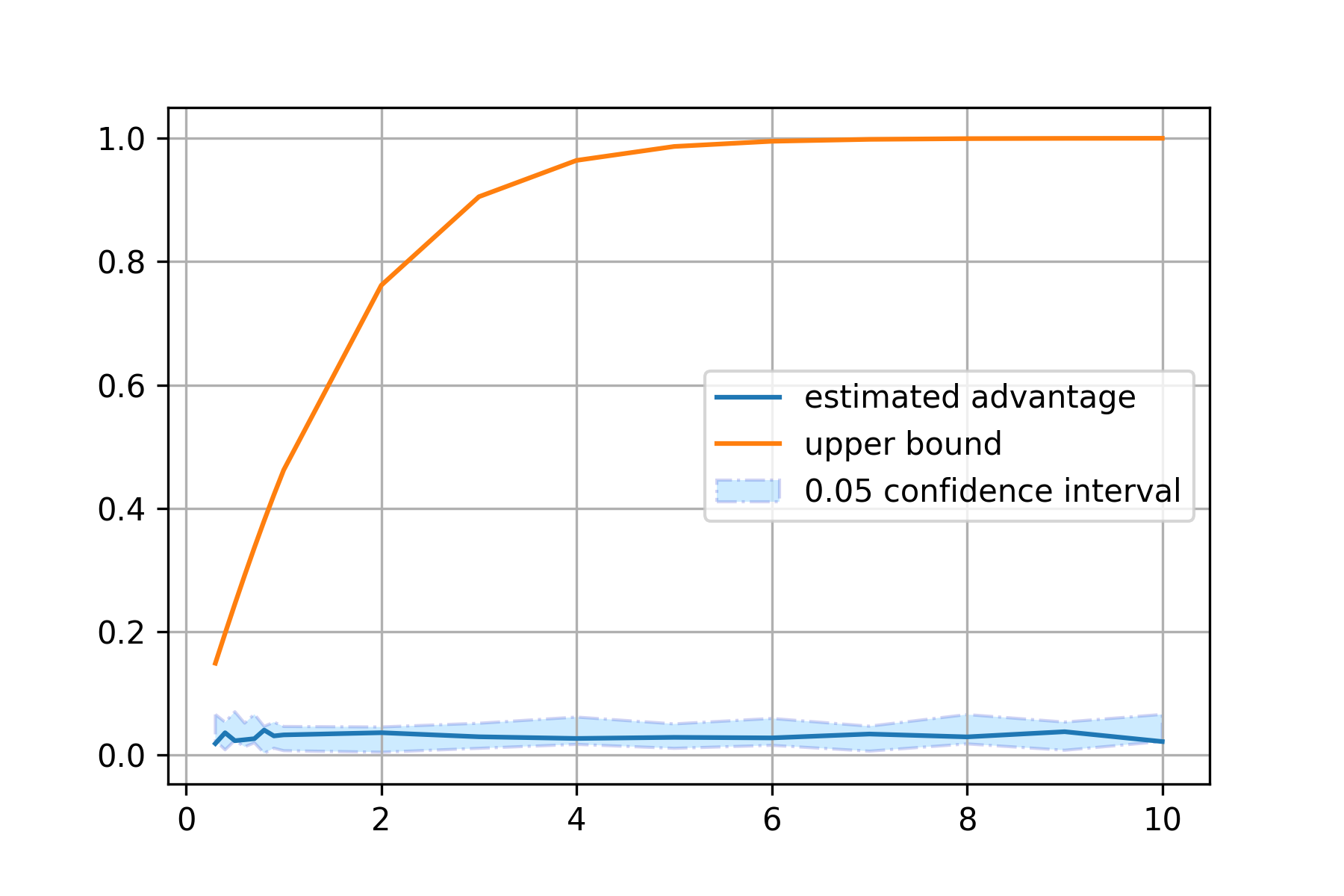}
\put(-170,-3){\textbf{$(\varepsilon, 10^{-5})$-differential privacy}}
\put(-210,15){\rotatebox{90}{{ \textbf{Membership Advantage}}}}

	\caption{The optimal membership advantage and its upper bound estimated versus the privacy budget $\varepsilon$ for DP-cGANs.}
	\label{fig:dp-gan}
\end{figure}

\begin{figure}[h!]
\centering
	\begin{tabular}{ccccc}
\includegraphics[width=.23\textwidth]{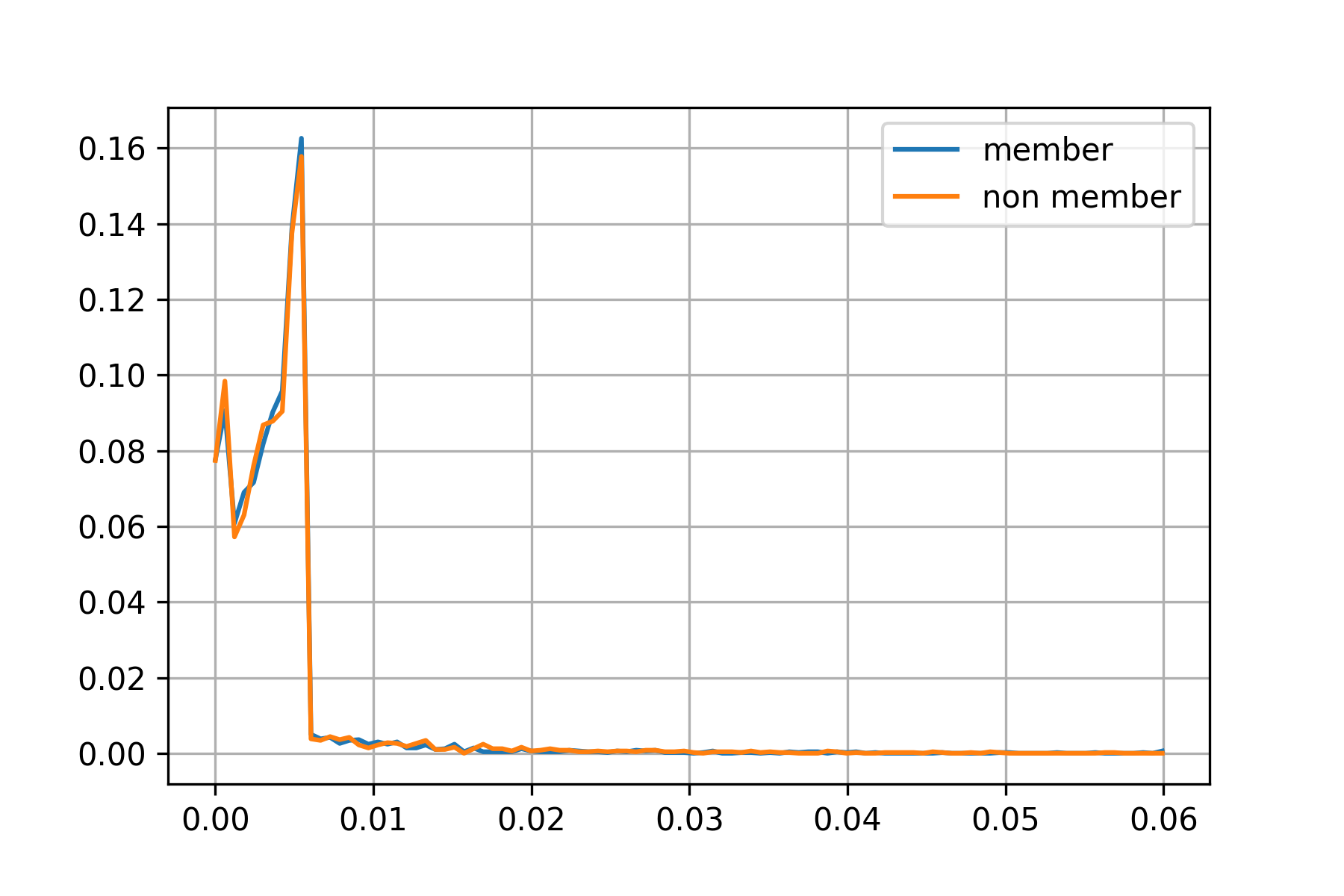}&
\put(-110,-3){{Individual privacy risk}}
\put(-130,20){\rotatebox{90}{Fraction}}
\put(-75,-15){(a)}
\includegraphics[width=.23\textwidth]{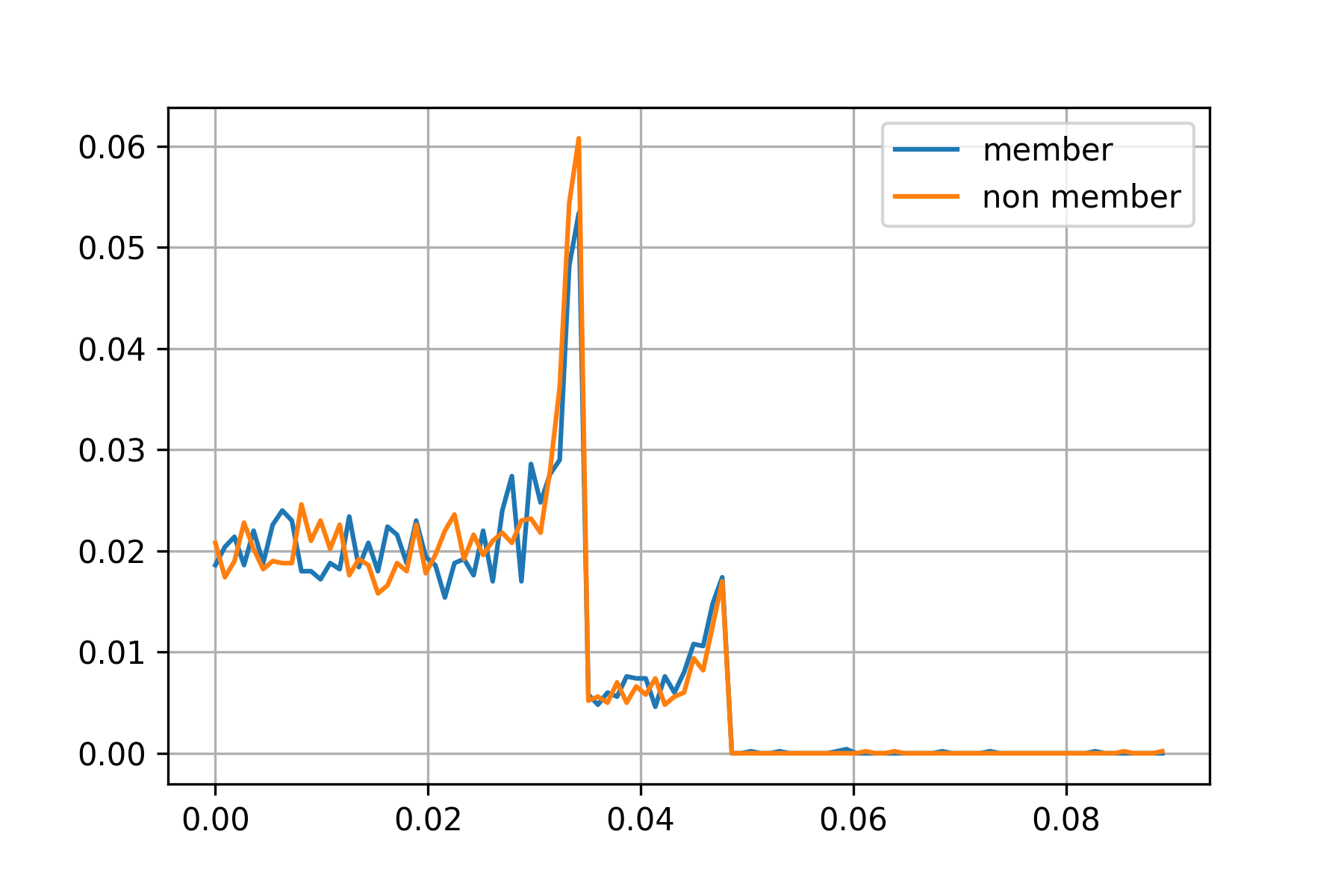}
\put(-90,-3){Individual privacy risk}
\put(-120,20){\rotatebox{90}{Fraction}}
\put(-60,-15){(b)}
\end{tabular}
	\caption{The individual privacy risks for DP-cGAN with privacy budget $(\varepsilon, 10^{-5})$ (a) individual privacy risks for $\varepsilon=2$ (b)individual privacy risks for $\varepsilon=10$.}
	\label{fig:dp-loss}
\end{figure}

 \section{The optimal membership advantage as a function of synthetic dataset size}
 To explore the effect the size of synthetic datasets on the membership advantage, we 
 first trained a JS-GAN on the skin-cancer MNIST dataset. The JS-GAN was trained for 2000 epochs and several synthetic datasets of sizes varying from $10$ to $10^5$ samples was generated. The adversary used was the one described in Equation \ref{qs:syn}. It can be seen in Figure \ref{fig:scmnist_size} that as the synthetic dataset size increases, so does the optimal membership advantage.
 
 \begin{figure}
    \centering
    \includegraphics[scale = 0.5]{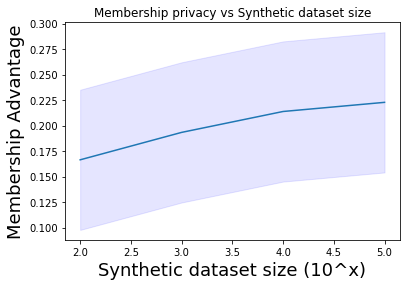}
    \caption{The optimal membership advantage vs. the synthetic dataset size.}
    \label{fig:scmnist_size}
\end{figure}

 \section{Estimation of optimal membership advantage for discriminative models}
 To demonstrate that our estimators of optimal membership advantage are also applicable to discriminative models, we trained a simple binary classifier (architecture described in section \ref{toy}) on the skin-cancer MNIST dataset (same experimental setting as the generative model). We chose two queries: i) a black-box query - the final output of the classifier, ii) a white-box query - the output of the penultimate layer of the classifier. We discretized the outputs and used our discrete estimator to estimate the optimal membership advantage in each case, as seen in Figure \ref{fig:disc_model}. As expected, the white-box query has a higher optimal membership advantage than the black-box query.
 
\begin{figure}
    \centering
    \includegraphics[scale = 0.5]{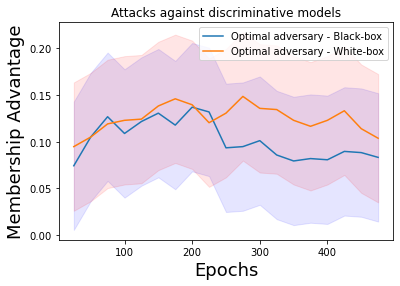}
    \caption{Estimation of the optimal membership advantage with white-box and black-box queries against a binary classification model on skin-cancer MNIST.}
    \label{fig:disc_model}
\end{figure}

\end{document}